%% file: RegulatorCareerOption_rev.tex
\documentclass[11pt]{article}
\usepackage[dvipsnames,table]{xcolor} 
\usepackage{setspace}            
\usepackage{amssymb,amsmath, amsthm, setspace, color, fullpage}
\usepackage{mathrsfs}            
\usepackage{dsfont}              
\usepackage[margin=1in]{geometry} 
\usepackage{graphicx}
\usepackage{tikz}
\usetikzlibrary{arrows.meta,positioning,calc,fit,shapes.geometric}
\usepackage[english]{babel}
\usepackage{float}
\usepackage[justification=centering]{caption}
\usepackage{subcaption}
\usepackage{placeins}
\usepackage{lscape} 
\usepackage{wrapfig}
\usepackage{lipsum}
\usepackage[authoryear]{natbib}
\usepackage[titletoc,toc,page]{appendix}
\definecolor{electcyan}{rgb}{0.0,1.0,1.0}
\definecolor{darkcyan}{rgb}{0.0,0.55,1.0}
\definecolor{cyanproc}{rgb}{0.0,0.72,0.92}
\definecolor{brightgreen}{rgb}{0.4,1.0,0.0}
\definecolor{Ao}{rgb}{0.0,0.0,1.0}  
\definecolor{darkblue}{rgb}{0.0,0.0,0.55} 
\definecolor{electcrimson}{rgb}{1.0,0.0,0.25} 
\definecolor{americanrose}{rgb}{1.0,0.01,0.24}
\usepackage[pdfborder={0 0 0},final=true,breaklinks = true,colorlinks=true,linkcolor=red,citecolor=blue]{hyperref}
\usepackage{calc}     
\usepackage[shortlabels]{enumitem} 
\usepackage[normalem]{ulem} 
\usepackage[capitalize,nameinlink,noabbrev]{cleveref}
\crefname{thm}{theorem}{theorems}
\Crefname{thm}{Theorem}{Theorems}
\crefname{prop}{proposition}{propositions}
\Crefname{prop}{Proposition}{Propositions}
\crefname{lem}{lemma}{lemmas}
\Crefname{lem}{Lemma}{Lemmas}
\crefname{cor}{corollary}{corollaries}
\Crefname{cor}{Corollary}{Corollaries}
\crefname{defn}{definition}{definitions}
\Crefname{defn}{Definition}{Definitions}
\crefname{rem}{remark}{remarks}
\Crefname{rem}{Remark}{Remarks}
\crefname{assumption}{assumption}{assumptions}
\Crefname{assumption}{Assumption}{Assumptions}
\numberwithin{equation}{section}
\newcommand{\tab}{\hspace*{2em}}              
\theoremstyle{plain}
\newtheorem{thm}{Theorem}[section]

\newtheorem{lem}[thm]{Lemma}
\newtheorem{prop}[thm]{Proposition}
 
\theoremstyle{plain}
\newtheorem{assumption}[thm]{Assumption}      
\theoremstyle{definition}

\theoremstyle{definition}
\newtheorem{defn}{Definition}[section]
\theoremstyle{remark}
\newtheorem{rem}{Remark}[section]

\renewcommand{\thefootnote}{\fnsymbol{footnote}} 

\makeatletter
\let\xx@thm\@thm
\AtBeginDocument{\let\@thm\xx@thm}
\makeatother

\begin{document}
\pagenumbering{roman}
\title{\textbf{A Regulator's Career Option: Revolving Doors, Regulatory Signals, and Firm Tail Risk}
\footnote{This paper benefited from the detailed comments of Samuel Myers, Jr. and Gary Dymski. We thank seminar participants at the University of Leicester and conference participants at the Academy for Behavioral Finance \& Economics (Robert B. Durand, discussant), the Southern Finance Association (Rachel Graefe-Anderson, discussant), the ESSA Biennial Meeting, and the Midlands Economic Theory and Applications Workshop. We are also grateful to Paul Kupiec, Leonard Nakamura, Doug Robertson, Ernesto Dal B\'o, Oliver Martin, Tony Leiman, Linwood Tauheed, Michael Crew, Rebel Cole, Asha Sundaram, Peter Hammond, and Haim Abraham for their valuable comments and advice. Any errors which may remain are ours. We gratefully acknowledge the research support of the Institute for Innovation and Technology Management.}
}
\author{
    G. Charles-Cadogan
     \thanks{University of Leicester, School of Business, Leicester, LE2 1RQ;  Tel: +44 (0)116 229 7385; e-mail: \textcolor[rgb]{0.00,0.00,1.00}{\href{mailto:gocadog@gmail.com}{gocadog@gmail.com}}}~\\~\\ Working Paper
    }
\date{\today}
\maketitle
\thispagestyle{empty}
\begin{abstract}
\noindent This paper develops a model of revolving doors in which a regulator designs policy signals that affect a regulated firm's capital structure while holding an implicit career option to leave public service and join the regulated firm or its advisers. The model has three linked mechanisms. First, regulatory signals alter the firm's admissible pricing and leverage choices by changing the tradeoff between consumer surplus and firm profit. Second, because the regulator's outside payoff is convex in the value of the regulated firm, the value of the regulator's career option increases with firm value, leverage, and signal-induced volatility. Third, when signal-induced volatility raises the option vega of firm equity, the same mechanism can increase tail risk and lower distance to default. A proof-of-concept calibration illustrates the model's leverage-vega warning regions and shows how a regulator-induced signal can move a firm from a go-leverage region into an early-warning or crash-risk region. The calibration is extended with maturity comparisons, a sensitivity grid over compensation convexity and initial leverage, a bank-like parameter anchor, stochastic signal simulations, and a stopping-horizon diagnostic. External evidence from federal payroll data on post-employment restrictions supports the model's career-option primitive, while the exercises also organize historical regulatory failures: Lehman Brothers as a regulatory-vacuum case, Silicon Valley Bank as a distorted-signal and weak-supervision case, and Signature Bank as a converted-signal case in which post-public-service incentives and lighter oversight coincided with rapid growth, uninsured-deposit exposure, and tail risk. The model does not predict collapse dates; it helps explain and predict when regulatory forbearance, weak debt penalties, and career-option incentives can delay binding oversight until firms cross a leverage-vega warning threshold.
\\ \\
\textbf{Keywords}: career option, revolving door, capital structure, leverage, vega, managerial compensation
\\ \\
\textbf{JEL Classification Codes}: G01, G18, G28, G38, J24, J44, J45, L51
\end{abstract}
\newpage
\tableofcontents
\listoffigures
\listoftables
\clearpage
\pagenumbering{arabic}
\renewcommand\thefootnote{\fnsymbol{footnote}}
\renewcommand\thefootnote{\arabic{footnote}}
\section{Introduction}\label{sec:Intro}
\tab This paper studies a revolving-door problem in which a regulator can affect a firm's capital structure while contemplating a future move to that firm, a lobbying organization, or an affiliated consultancy. The model is not a claim that all regulators behave opportunistically. It is a mechanism-design and contingent-claim model of a particular conflict: when the regulator's future labor-market payoff is convex in the regulated firm's value, regulatory discretion can create incentives to favor signals that raise firm value, leverage, and option-like compensation.

\tab The players are a regulator, a regulated firm, and the public. The regulator chooses a regulatory signal $\theta$ or, more generally, a state-dependent signal $s(\omega,\theta)$ that affects the firm's admissible strategy space. The firm observes or infers the signal and chooses pricing, leverage, or investment actions from the induced strategy set. The public receives consumer surplus and bears part of the cost of distorted regulation. The regulator maximizes a value function that includes consumer surplus, firm profit, and the private value of a career option to quit and join the regulated firm. The firm maximizes profit subject to the strategic environment induced by the regulator's signal. The model therefore has two connected incentive problems: the public-sector welfare problem and the regulator's private option-value problem.

\tab The central object is the regulator's implicit career option. If the regulator can leave public office and receive an equity-linked payoff from the regulated firm, then the payoff resembles a call option with value $(\alpha V-K)^+$, where $V$ is firm value, $K$ is the regulator's current compensation or outside strike value, and $0<\alpha<1$ is the share of firm value embedded in the future compensation package. Regulatory signals matter because they affect $V$, leverage $D/V$, and signal-induced volatility $\sigma(s(\theta))$. A signal that raises firm value or volatility can therefore increase the value and vega of the regulator's career option even when it also raises firm tail risk.

\tab The paper delivers three theoretical claims. First, regulatory signals embedded in firm capital structure can divide the firm's strategy space into consumer-surplus and firm-profit regimes. Second, the regulator's career option is increasing in firm value and is more valuable for levered firms when the regulator's future payoff is tied to equity-like compensation. Third, under an information-based option-pricing interpretation, signal-induced volatility raises the vega of the option on firm value and can generate a leverage-vega warning region in which firm value-at-risk is higher than it would be without regulatory capture.

\tab The calibration does not estimate a causal effect or assert that a particular industry episode followed the model. It illustrates how the model's primitives generate the qualitative regions implied by the propositions: a go-leverage region, an early-warning region, and a crash or bankruptcy region. The extended exercises then show how those regions respond to maturity, leverage, compensation convexity, bank-like parameter anchors, stochastic signal realizations, and quitting-horizon choices. This structure also provides a diagnostic language for prominent financial failures in which delayed, missing, or distorted regulatory signals allowed leverage, duration risk, uninsured-deposit dependence, or other tail exposures to accumulate before binding intervention occurred.

\tab The paper relates first to classic theories of revolving doors and regulatory capture \citep{Che1995,Salant1995,Stigler1971}. Recent work has sharpened this literature by distinguishing several mechanisms. \citet{DeChiaraSchwarz2021} develop a dynamic theory in which a future private-sector job can dominate hidden transfers as a capture device because the revolving door publicly signals the firm's willingness to reward lenient regulators. \citet{Kempf2020JobRating} shows that revolving-door incentives can also operate through reputational and accuracy incentives in credit-rating labor markets, which is an important counterweight to one-sided capture interpretations. \citet{KaticKim2024} show that regulator-to-firm mobility can improve regulatory outcomes for firms, but that the advantage is concentrated just before the mobility event. \citet{EmeryFaccio2025ExecutiveBranch} map executive-branch revolving-door appointments into corporate positions and show that firms are more likely to receive procurement contracts after appointing recent former regulators, while some contracts become more costly through renegotiation. \citet{YatesCardinTrudeau2021} emphasize ``lobbying from within'' and entry-side cultural capture, while \citet{HeimsMoxon2024} argue that capture claims require evidence on the mechanism by which private influence changes regulatory decisions. These recent papers motivate the present model's focus on the timing and mechanism of the regulator's private payoff rather than on a blanket assertion of capture.

\tab The paper also connects to regulation and capital structure \citep{SpiegelSpulber1994,BortolottiCambiniRondiSpiegel2011}, political connections and financial firms \citep{BraunRaddatz2010,LucaSeruTrebbi2014,Lambert2017}, and option-based interpretations of firm equity and compensation \citep{Merton1974,Hull2006}. Recent financial-regulation evidence is especially close to the paper's mechanism. \citet{HirtleKovner2022} review the supervisory literature and emphasize that supervision uses soft information to discipline bank risk-taking. \citet{KimPlosserSantos2018Leveraged} show that leveraged-lending guidance affected closely supervised banks but also shifted risk toward nonbanks, illustrating why an apparently binding regulatory signal can move risk rather than eliminate it. \citet{ColonnelloKoetterSclipWagner2023} study reverse revolving doors in European bank supervision and report that former financial-industry executives on supervisory boards are associated with positive bank valuation effects, lower regulatory capital, and faster bank growth before the Single Supervisory Mechanism. \citet{HaselmannSchoenherrVig2018Elite} show how elite social networks can distort bank credit allocation, reinforcing the paper's emphasis on network-mediated regulatory and financing advantages. \citet{BerchtoldBowlerBrockmanLiTresl2026PED} find that politically experienced directors reduce the sensitivity of corporate investment to policy uncertainty, especially for irreversible investment, which supports the model's claim that political and regulatory human capital can change firm policy under uncertainty. \citet{SchramUle2024} show experimentally that regulatory independence can reduce electoral holdup but can also entrench capture when policymakers' objectives align with regulated owners. The contribution here is to connect these strands in a compact mechanism: a regulator's policy signal changes firm strategy and capital structure, while the regulator's future labor-market payoff has the convexity of an option on the regulated firm.

\tab Direct empirical evidence on federal employees supports the model's career-option primitive. \citet{KalmenovitzVijXiao2025} use payroll records on 22 million federal employees and the salary threshold that triggers post-employment restrictions under 18 U.S.C. \S 207 to show significant bunching just below the restriction cutoff. In agencies with broad regulatory powers, limited political supervision, and frequent interaction with high-paying industries, they estimate that 49\% of regulators respond to revolving-door incentives and sacrifice 7.4\% of wage potential to preserve private-sector opportunities. They also find that revolving regulators initiate fewer enforcement actions and issue rules with lower compliance costs. In the language of the present model, their evidence verifies the maintained economic object: the outside job opportunity is valuable before exit, regulators trade current public compensation against future private-sector option value, and the incentive is associated with softer regulatory signals.

\tab The rest of the paper proceeds as follows. In \cref{sec:Model} we introduce the model and derive the regulator's career-option mechanism. In \cref{sec:ProofOfConcept} we provide a proof-of-concept calibration of leverage, vega, and warning regions. In \cref{sec:Conclusion} we conclude. The appendix states the abstract legislative mechanism that supports the regulator's signalling problem.

\section{The Model}\label{sec:Model}
\tab The model has three primitive objects: a regulator type $\theta\in\Theta$, a state of nature $\omega\in\Omega$, and a regulatory signal $s(\omega,\theta)$ that maps the regulator's information and type into a strategy space $G$ for the firm. The firm responds to the induced strategy space by choosing prices, leverage, or investment policies. The regulator values public welfare, but also holds an implicit career option $C:G\rightarrow X$ whose payoff is increasing in the regulated firm's value. Thus regulatory signals affect both the firm's strategy set and the regulator's private option value. The propositions below first formalize the abstract career option, then show how welfare tradeoffs, leverage, option value, and vega interact.
\begin{figure}[!htbp]
\centering
\resizebox{\linewidth}{!}{%
\begin{tikzpicture}[
  node distance=1.25cm and 1.55cm,
  box/.style={draw=black, rounded corners=2pt, align=center, minimum width=3.0cm, minimum height=0.95cm, fill=gray!10},
  smallbox/.style={draw=black, rounded corners=2pt, align=center, minimum width=2.55cm, minimum height=0.80cm, fill=gray!6},
  arrow/.style={-{Latex[length=2.2mm]}, thick, draw=black},
  feedback/.style={-{Latex[length=2.2mm]}, thick, draw=black, dashed}
]
\node[box] (reg) {Regulator\\ type $\theta$};
\node[box, right=of reg] (sig) {Policy signal\\ $s(\omega,\theta)$};
\node[box, right=of sig] (strat) {Firm strategy set\\ $G(s)$};
\node[box, right=of strat] (firm) {Firm choices\\ $p,D,V,\sigma$};
\node[smallbox, below=of strat] (public) {Public payoff\\ $CS(p)$};
\node[smallbox, below=of firm] (profit) {Firm profit\\ $\Pi(p,V,D)$};
\node[box, below=of sig] (option) {Career option\\ $C(s)=(\alpha V-K)^+$};
\draw[arrow] (reg) -- (sig);
\draw[arrow] (sig) -- (strat);
\draw[arrow] (strat) -- (firm);
\draw[arrow] (firm) -- (profit);
\draw[arrow] (firm) |- (public);
\draw[arrow] (profit) -- (option);
\draw[feedback] (option) -- (reg);
\draw[feedback] (public) -- (reg);
\node[draw=black, dashed, rounded corners=2pt, fit=(sig)(strat)(firm)(profit)(option), inner sep=0.25cm, label={[align=center]above:Signal-induced capital-structure channel}] {};
\end{tikzpicture}%
}%
\caption{Mechanism sketch: regulatory signal, firm response, and career-option feedback}\label{fig:TikZMechanismSketch}
\begin{minipage}{0.90\linewidth}
\footnotesize
\textit{Notes:} The regulator's signal changes the firm's feasible strategy set. Firm choices affect consumer surplus, firm profit, firm value, leverage, and volatility. Because the regulator's outside payoff is convex in firm value, the career option feeds back into the regulator's signal incentives.
\end{minipage}
\end{figure}
\begin{prop}[Regulator's abstract career option on regulated firms]\label{prop:RegulatorAbstractOption}
   Let $\Omega$ be a sample space for states of nature, $\Theta$ be a type space for regulators, $G$ a strategy space, and $\mathcal{M}=(G,C)$ be an abstract legislation mechanism in \Cref{defn:Mechanism}, and $s$ be a regulatory signalling function such that $s:\Omega\times\Theta\rightarrow G$. Let $f$ signify the regulator's unobservable choice function. So that $f:\Omega\times\Theta\rightarrow X$ is a stochastic choice function mapped into outcome space. Specifically, let $C:G\longrightarrow X$ be the regulator's career [call] option on regulated firms. We claim that the abstract value of the option is given by the legislative mechanism: $$f(\omega,\theta)\cong (C\circ s)(\omega,\theta)$$
   where $\cong$ represents an isomorphism.
\end{prop}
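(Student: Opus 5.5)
The plan is to read the claim as a factorization statement in the sense of mechanism design: the regulator's unobservable stochastic choice function $f:\Omega\times\Theta\to X$ is implemented by the mechanism $\mathcal{M}=(G,C)$ of \Cref{defn:Mechanism} if and only if it factors through the signal space $G$. I would therefore prove the isomorphism in two stages. First, establish existence of a factorization $f = C\circ s$ (up to relabelling). Second, show that this factorization is unique up to a bijection of the relevant quotient objects, which is what $\cong$ should mean.

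\textbf{Existence of the factorization.} I would invoke the defining property of the abstract legislation mechanism: the regulator's outcome is determined only through the strategy set the signal induces. Formally, whenever $s(\omega,\theta)=s(\omega',\theta')$, the outcome is the same, so $f(\omega,\theta)=f(\omega',\theta')$; this is the measurability of $f$ with respect to the partition of $\Omega\times\Theta$ generated by $s$. Given this condition, define $C$ on the image $s(\Omega\times\Theta)\subseteq G$ by $C(g):=f(\omega,\theta)$ for any $(\omega,\theta)\in s^{-1}(g)$. This is well defined by the condition above. Extend $C$ arbitrarily off the image, for instance by the out-of-the-money payoff $(\alpha V-K)^+=0$. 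This yields $f=C\circ s$ pointwise, hence a fortiori in distribution, which handles the stochastic aspect of $f$.

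\textbf{The isomorphism.} Let $\sim_s$ and $\sim_f$ be the equivalence relations on $\Omega\times\Theta$ induced by $s$ and $f$. By the canonical decomposition of a map, $f$ factors as
\[
\Omega\times\Theta \to (\Omega\times\Theta)/\!\sim_f \;\xrightarrow{\ \cong\ }\; f(\Omega\times\Theta)\subseteq X .
\]
The composite $C\circ s$ factors in the same way through $(\Omega\times\Theta)/\!\sim_{C\circ s}$. Since $C\circ s=f$, the two quotients coincide, and the induced map between their images is a bijection. That bijection is the asserted isomorphism. If $X$ carries additional structure, such as an order or a measurable structure on payoffs, I would check that the bijection preserves it. This should follow because $C$ is a monotone call payoff in firm value $V$.

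\textbf{Main obstacle.} The hardest step is justifying that $f$ is constant on the fibres of $s$. The regulator's choice is unobservable and may depend on $(\omega,\theta)$ beyond what the signal reveals. If that happens, the factorization fails and only a weaker statement holds. I would first try to derive the constancy directly from \Cref{defn:Mechanism}, interpreting the mechanism as a revelation-type environment in which the firm, and hence the outcome, reacts only to $G$. If this cannot be extracted, I would state it as the maintained assumption that $s$ is a sufficient statistic for outcomes. A fallback is to enlarge $G$ by composing $s$ with the identity on the residual information, making $s$ injective so that the factorization becomes trivial. The cost is that the isomorphism then holds only on this refined strategy space.
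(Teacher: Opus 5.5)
You have the direction of the argument reversed, and that reversal creates the obstacle you identify. The paper's argument in \cref{apx:AbstractMechanismDesign} rests on the revelation principle, which it states as \Cref{defn:RegRevelationPrinciple}. There the indirect mechanism $(G,C)$ and the signal $s$ are primitive. The direct mechanism is obtained from them through the commutative diagram in \cref{fig:RegulatorCareerOptionFig01}, i.e.\ $f:=C\circ s$. With $f$ constructed this way there is nothing to factor. The relation $\cong$ is just functional equality, and the identity $\int_A f\,dP=\int_A (C\circ s)\,dP$ for $A\in\mathcal{F}$ is immediate. The only substantive requirement is that truthful reporting of $\theta$ be optimal. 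That is the incentive-compatibility condition of \Cref{defn:RegIncentiveConstraint}, with participation as in \Cref{defn:RegParticipateConstraint}.

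Running the construction from $f$ to $C$ does not prove the statement as posed, because the statement fixes $C$ as the regulator's career call option $(\alpha V-K)^+$. Your fibre-constancy step has two possible sources, and neither helps.
\begin{enumerate}
\item If you extract it from \Cref{defn:Mechanism} by reading the outcome as $C$ applied to the induced strategy, you have already assumed $f=C\circ s$ with the given $C$. The factorization construction is then redundant.
\item If you impose it as a sufficient-statistic assumption, your map $C(g):=f(\omega,\theta)$ for $(\omega,\theta)\in s^{-1}(g)$, extended arbitrarily off the image, is a new map $C'$. You obtain $f=C'\circ s$, not $f\cong C\circ s$. Your appeal to monotonicity of the call payoff for order preservation refers to the $C$ you replaced, not to $C'$.
\end{enumerate}
The fallback of refining $G$ until $s$ is injective changes the mechanism $\mathcal{M}$, so it proves a claim about a different $(G,C)$. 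The quotient-set step adds nothing: once equality holds pointwise, the isomorphism is the identity. The repair is to take $(G,C,s)$ as given, define $f:=C\circ s$, and place the argument in the IC constraint rather than in a factorization.
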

\begin{proof}
   See \cref{apx:AbstractMechanismDesign}.
\end{proof}
\begin{rem}
   Technically, the legislative mechanism is supposed to induce the  firm to engage in some action for the regulator's benefit. As we will see in the sequel, that ``something`` is profit maximization or leverage. \cref{subsec:WelfarEfffectsOfRegulateSignal} illustrates application of elements of this abstract proposition,\hfill $\Box$
\end{rem}
\subsection{Welfare effects of embedding regulator signals in capital structure}\label{subsec:WelfarEfffectsOfRegulateSignal}
Without loss of generality, for the purpose of exposition, in this subsection our analysis is based on deterministic parameterizations. Furthermore, we assume that the form is monopolistically competitive so it can execute a pricing strategy. Here, the regulator wants to maximize her social welfare function, i.e. value function, comprised of consumer surplus $(CS)$ and the firm profits $(\Pi)$ [s]he affects by embedding regulatory signals $\theta$ in the firm's capital structure. That is, through choices and regulatory decisions the regulator attempts to affect firm equity values. Our motivation for the welfare function is based on \citet[pg.~427]{SpiegelSpulber1994}, who state:
\begin{quote}
     [T]he  regulator  establishes the  regulated  price  by  maximizing  a  welfare function,  taking  the  firm's  investment  and  capital  structure  as  given.  The  balancing  of consumers'  and investors'  interests  is made explicit by  assuming that the welfare  function is  a  weighted  sum of  consumers'  surplus  and  firm profits.  Then,  the  regulated  firm  produces  its  output  and  the  regulated  market  clears.  The  focus  of  this  article  is  on  characterizing  the  subgame equilibrium of  the  regulatory  game.
\end{quote}
See also \citet[pg.~916]{BaronMyerson1982}. However, in our model capital structure is embedded with regulatory signals and our equilibrium is based on a reversed game theory, i.e. mechanism design approach.\footnote{We thank Gary Dymski for motivating clarification of this point.}  This is a source of ``market friction" consistent with the tradeoff hypothesis of capital structure theory, i.e., ``tradeoff theory looks at how businesses balance the pros and cons of different forms of financing"  \citep{Kwan2009}. We show below, that that is indeed the case. For example, as regulators tradeoff welfare for firm profit, the price strategy space is affected. However, the latter is tied to the firms activity ratio (e.g., sales/(total assets)) which is a key predictor of bankruptcy \citep[e.g.,][]{Altman1984} in tradeoff theory otherwise known as the static theory of capital structure which examines tradeoff between corporate tax shields and expected bankruptcy costs of leverage.

\tab Consistent with \citet[pg.~429,~equation~(6)]{SpiegelSpulber1994}, we identify the following primitives used to construct our criterion function.\par\noindent\textbf{Primitives}
\begin{description}[labelwidth=\widthof{\bfseries D~~~~}]
    \item[$p\;\triangleq$]  price of firm output
    \item[$k\;\triangleq$]  firm capital investment
    \item[$D\triangleq$]  nominal amount of debt owed to bondholders
    \item[$E\triangleq$]  market value of the firm equity
    \item[$\rho\triangleq$] debt/distress penalty weight in the regulator value function
\end{description}
Unlike \citet[pg.~428]{SpiegelSpulber1994}, who let $E$ be the market value of new equity, and $B$ be the market value of bonds, we let the firm's capital $k$ be its value $V(\theta)$ which is a combination of debt $D(\theta)$, and market value of equity $E(\theta)$. In that way, the market value of equity is transmitted through firm value for a given amount of debt. Assuming no transfers or taxes, \citet[pg.~429,~equation~(6)]{SpiegelSpulber1994} posit a ``utilitarian welfare function" which we modify to construe as our regulator value function written as
\begin{align}
   W(p,V(\theta),D(\theta)) &= CS(p)+b\Pi(p,V(\theta),D(\theta))-\rho D(\theta)\label{eq:RegulatorValueFunction}\\
   V(\theta) &= D(\theta)+E(\theta)\label{eq:RegulatorSignalEmbededCapStructure}
\end{align}
where $CS(p)$ is consumer surplus; $0<b<1$ is a welfare weight; $\Pi(p,V(\theta),D(\theta))$ is firm profits; and $\rho>0$ is the regulator's penalty weight on debt or expected distress exposure.
\begin{assumption}\label{assum:RegValueFunction}
   \begin{enumerate}[{A}.1]
      \item  Debt is signal dependent, $D_\theta\geq 0$. In the high-leverage region, $L(\theta)=D(\theta)/V(\theta)$ satisfies $L_\theta>0$, equivalently $D_\theta/D>V_\theta/V$.
      \item  Regulator signals favor firm capital and profits through the firm-value and leverage channels, i.e. $\Pi_V>0$, $V_\theta>0$, and $b\Pi_VV_\theta+(b\Pi_D-\rho)D_\theta>0$ over the relevant region. This includes the case in which debt or distress is weakly penalized, so that leverage-taking is not fully internalized.
      \item  Consumer surplus decreases in price, $CS_p<0$.
   \end{enumerate}
\hfill $\Box$
\end{assumption}
\begin{figure}[!htbp]
\centering
   \begin{minipage}[h]{0.95\linewidth}
      \captionof{figure}{Impact of regulator signal $\theta$ on price $(p)$ and firm profits $\Pi$ }
      \centerline{\includegraphics[scale=0.6,height=11.5cm,width=11cm]{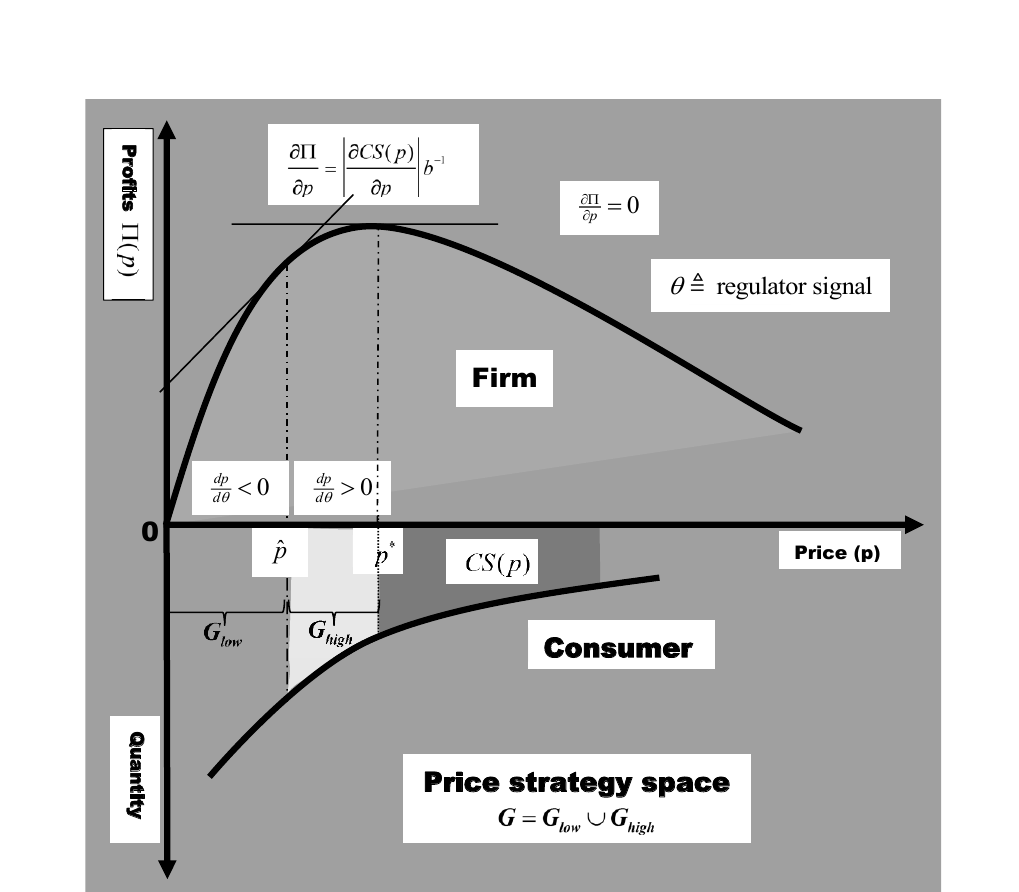}}
      \label{fig:RegulatorCareerOptionFig2}
       \begin{flushleft}
         In regulator induced price strategy space $G_{low}=\{0<p\leq \hat{p}\}$ the rate of price change $p_\theta$ relative to regulator signal $\theta$ is decreasing. For $G_{high}=\{\hat{p}<p\leq p^*\}$ the rate of price change is increasing (and consumer surplus is falling), and firm profits are ultimately maximized at $p^*$. An admissible [price] strategy space for consumer welfare firm profit tradeoffs is the support $G=G_{low}\cup G_{high}$.
       \end{flushleft}
   \end{minipage}
\end{figure}
\noindent Under \Cref{assum:RegValueFunction}, the regulator signal is channeled through firm value and debt. The effective total differential of \eqref{eq:RegulatorValueFunction} is given by
\begin{align}
    d(W(p,V(\theta),D(\theta))) &= CS_p\,dp + b\{\Pi_p\,dp+\Pi_VV_\theta\,d\theta+\Pi_DD_\theta\,d\theta\}-\rho D_\theta\,d\theta .
\end{align}
Equate the left hand side to $0$, and let subscripts denote partial derivatives. Then
\begin{align}
    (CS_p+b\Pi_p)dp + \{b\Pi_VV_\theta+(b\Pi_D-\rho)D_\theta\}d\theta &=0,\nonumber\\
    \frac{dp}{d\theta} &=p_\theta=\frac{b\Pi_VV_\theta+(b\Pi_D-\rho)D_\theta}{|CS_p|-b\Pi_p}.\label{eq:RegulateFirstOrderpriceEffects}
\end{align}
The term $(b\Pi_D-\rho)D_\theta$ is the incremental debt channel. If debt or distress is weakly penalized relative to its private profit contribution, this term amplifies the price-signal response. If $\rho$ is large, it dampens or can reverse the response.

\tab The price-signal ratio induced by regulator signal $\theta$ in \eqref{eq:RegulateFirstOrderpriceEffects}, informs the firm about admissible pricing strategies, leverage, and profits depicted in \cref{fig:RegulatorCareerOptionFig2}. There, in the price range $G_{low}=\{0<p\leq \hat{p}\}$ the rate of price change $p_\theta$ relative to $\theta$ is decreasing. This implies that consumer surplus is increasing. However, firm profits are not being maximized. In the range $G_{high}=\{\hat{p}<p\leq p^*\}$ the rate of price change is increasing (and consumer surplus is falling), and firm profits are ultimately maximized at $p^*$. At which point the darker shaded region is the loss in consumer surplus $CS(p)$ or consumer welfare sacrificed for firm profits. Thus, regulator signals allow the firm to adopt a pricing strategy in either one of the price ranges according to whether the regulator's focus is on maximizing consumer welfare at the expense of firm profits or vice versa.\footnote{\citet{ZimperHassan2012} introduced a model in which regulators learn about firms' price collusion strategies through dynamic price dividend ratios.} In other words, an admissible [price] strategy space is the support $G=G_{low}\cup G_{high}$.

\tab The same signal also has a capital-structure implication. Since $L(\theta)=D(\theta)/V(\theta)$,
\begin{align}
   L_\theta=\frac{D_\theta V(\theta)-D(\theta)V_\theta}{V(\theta)^2}.\label{eq:LeverageSignalDerivative}
\end{align}
Thus leverage rises when debt grows faster than firm value, $D_\theta/D>V_\theta/V$. This is the relevant revolving-door mechanism: if debt-taking is not heavily penalized in the regulator value function, the firm can move toward a highly levered region. Higher leverage can increase the convex value of equity-linked career compensation, but it also reduces the distance-to-default component $\log(V(\theta)/D(\theta))$ in the option and default calculations below.

\tab Since the value of the regulator's career option is tied to firm profit, we should expect regulators who plan to exercise that option to send signals that support pricing in the range $G_{high}$.  The revolving-door literature provides examples consistent with this mechanism. \citet[pp.~217-218]{DalBo2006} report that former FCC commissioners who take industry employment ``increase their support for industry interest by nearly 11 percent during their last year in office". In that case, consumer welfare is sacrificed for regulator career goals. See, e.g., \citet[pg.~8]{Kane2011}, who summarizes a related issue thus:  ``The importance of political, bureaucratic, and \emph{career interests in regulatory decision- making} allows such firms to screen regulatory appointments and to distort regulatory policies \emph{ex ante} and to reshape their enforcement \emph{ex post}" (emphasis added). We summarize these results in
\begin{prop}[Regulator's career goal and tradeoff between consumer welfare and firm profit]\label{prop:RegCareerGoalFirmProfit}
    Assume that there are no transfer payments. Let $\theta$ be a regulator signal, and $W(p,V(\theta),D(\theta)) = CS(p)+b\Pi(p,V(\theta),D(\theta))-\rho D(\theta)$ be the regulator's value function, where $p$ is the price of the firm's product, $CS(p)$ is consumer surplus, $\Pi$ is a profit function as indicated, $V(\theta)$ is firm value with embedded regulator signal, $D(\theta)$ is debt, and $\rho>0$ is the debt/distress penalty weight. If \Cref{assum:RegValueFunction} holds, then regulator signals induce dichotomous decreasing and increasing price regimes in strategy space $G=\{G_{low}\cup G_{high}\}$, where $G_{low}=\{(0,\hat{p}]\}$ and $G_{high}=\{(\hat{p},p^*]\}$. Moreover, regulators will trade consumer welfare for firm profits in increasing price regimes $G_{high}$ in order to enhance the value of their career option $C:G\longrightarrow X$ to quit and join firm management. \hfill $\Box$
\end{prop}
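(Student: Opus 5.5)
The plan is to build the argument on the comparative-statics identity \eqref{eq:RegulateFirstOrderpriceEffects}. We then read off the two regimes from the sign and monotonicity of $p_\theta$, and finally invoke \Cref{prop:RegulatorAbstractOption} to connect the regime choice to the career option.

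\textbf{Step 1: sign of $p_\theta$.} Setting $dW=0$ under \Cref{assum:RegValueFunction} gives
\[
p_\theta=\frac{N(\theta)}{|CS_p|-b\Pi_p},\qquad N(\theta)=b\Pi_VV_\theta+(b\Pi_D-\rho)D_\theta .
\]
By A.2 the numerator $N(\theta)$ is strictly positive on the relevant region. By A.3, $|CS_p|>0$. I would split the admissible price range at the point where the denominator changes behavior.

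\textbf{Step 2: locating $\hat p$ and $p^*$.} Take $p^*$ to be the profit-maximizing price, so $\Pi_p(p^*)=0$. Below $p^*$ we have $\Pi_p>0$. Since $b<1$ and $|CS_p|$ is large at low prices, the denominator $|CS_p|-b\Pi_p$ stays positive there, and hence $p_\theta>0$.

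Differentiating $p_\theta$ along the induced path gives its curvature, $p_{\theta\theta}$. Define $\hat p$ as the inflection point where $p_{\theta\theta}=0$. This requires a regularity condition: $|CS_p|-b\Pi_p$ is single-crossing in $p$, for instance through concavity of $\Pi$ in $p$ and convexity of $CS$. Under that condition, $p_\theta$ is decreasing on $(0,\hat p]$ and increasing on $(\hat p,p^*]$. This yields $G=G_{low}\cup G_{high}$ as depicted in \cref{fig:RegulatorCareerOptionFig2}.

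\textbf{Step 3: welfare trade-off and the career option.} On $G_{high}$ we have $CS_p\,dp<0$ while $\Pi$ rises toward its maximum at $p^*$. Consumer surplus is therefore exchanged for profit, and by A.2 firm value $V(\theta)$ rises with it.

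Next I invoke \Cref{prop:RegulatorAbstractOption}: $f\cong C\circ s$, so the regulator's choice is valued through $C$. Write $C=(\alpha V-K)^+$. Since $V_\theta>0$ and $(\cdot)^+$ is nondecreasing and convex, $C$ is weakly increasing in $\theta$. The increase is strict once $\alpha V>K$, which is more likely in $G_{high}$, where $V$ is larger. A regulator who maximizes $W+C$, rather than $W$ alone, therefore tilts the signal toward $G_{high}$. Formally, the first-order condition acquires the extra positive term $C_VV_\theta$, which shifts the optimal $p$ to the right, weakly toward $p^*$.

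\textbf{Main obstacle.} The hard step is Step 2. The assumptions as stated give only the sign of $p_\theta$, not its monotonicity, so the existence of the threshold $\hat p$ requires the additional curvature or single-crossing condition. I would state that condition explicitly, for example by assuming $\partial_p(|CS_p|-b\Pi_p)$ changes sign once, and treat $\hat p$ as defined by it.
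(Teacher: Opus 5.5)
Your outline follows the paper's own argument: identity \eqref{eq:RegulateFirstOrderpriceEffects}, then a reading of the regimes from \cref{fig:RegulatorCareerOptionFig2}, then the heuristic that a career option tied to firm profit favors $G_{high}$. The gap is that your Step 2 produces regimes that are not the paper's. The caption's phrase ``$p_\theta$ is decreasing'' is loose, but the text fixes its meaning. On $G_{low}$ the price falls with the signal and consumer surplus \emph{increases}. On $G_{high}$ the price rises, consumer surplus falls, and profit climbs to its maximum at $p^*$. That sign dichotomy is what the first claim asserts. It is also what gives the second claim content, since the regulator is choosing between a consumer-surplus regime and a profit regime.

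You instead assert $|CS_p|-b\Pi_p>0$ on all of $(0,p^*)$, so that $p_\theta>0$ everywhere. This is not implied by \Cref{assum:RegValueFunction}, which never compares $|CS_p|$ with $b\Pi_p$; the fact that $b<1$ and $|CS_p|$ is large at low prices does not settle it, since $\Pi_p$ is typically large there too. If the claim held, it would contradict the proposition: $dCS/d\theta=CS_p\,p_\theta<0$ for every signal, so no consumer-surplus regime would exist. Redefining the split as concave versus convex $p(\theta)$ does not recover the statement, and your condition does not deliver even that. Write $\Delta=|CS_p|-b\Pi_p$ and take total derivatives along the path. Then $p_{\theta\theta}=(N_\theta\Delta-N\Delta_\theta)/\Delta^2$ depends on how $N$ moves with $\theta$, through $V_{\theta\theta}$, $D_{\theta\theta}$ and the cross-partials of $\Pi$, and a condition on $\Delta$ alone ignores this. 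In addition, concavity of $\Pi$ with convexity of $CS$ makes $\partial_p\Delta=-CS_{pp}-b\Pi_{pp}$ a sum of terms of opposite sign, so it implies no single sign change.

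Because A.2 fixes the sign of the numerator, the only way \eqref{eq:RegulateFirstOrderpriceEffects} can generate a falling and a rising price regime is a sign change of the denominator. Your Step 1 remark about splitting where the denominator ``changes behavior'' was the right instinct; the split must be where it changes \emph{sign}. The hypothesis to add is single crossing from above of $W_p=CS_p+b\Pi_p=-\Delta$ in $p$. A sufficient condition is strict concavity of $p\mapsto CS(p)+b\Pi(p,V,D)$ with an interior maximizer. Define $\hat p$ by $CS_p(\hat p)+b\Pi_p(\hat p)=0$; this is the regulator's own price optimum given $(V,D)$. Then $\Delta<0$ and $p_\theta<0$ on $(0,\hat p)$, while $\Delta>0$ and $p_\theta>0$ on $(\hat p,p^*]$. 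Since $\Pi_p(p^*)=0$, we get $\Delta(p^*)=|CS_p(p^*)|>0$ by A.3, so $\hat p<p^*$ automatically. Note that $\hat p$ is a turning point of the iso-welfare locus, where $p_\theta$ is unbounded, not an inflection point. The paper never states such a condition and simply reads the dichotomy off the figure, so you are right that something must be added; it just has to be this.

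Your Step 3, with objective $W+C$ and the extra marginal term $\alpha\mathbf{1}\{\alpha V>K\}V_\theta$, is a more explicit version of the paper's heuristic, which rests on the Dal B\'o and Kane evidence. Two pieces still need support. The claim that this ``shifts the optimal $p$ to the right'' needs a monotone-comparative-statics condition in $\theta$. It also uses $p_\theta>0$, which under the corrected regimes holds on $G_{high}$ only. Likewise, the claim that $V$ is larger in $G_{high}$ can no longer come from a monotone $p(\theta)$. It must come from firm value depending on price-dependent profit, as in \Cref{defn:RegParticipateConstraint}.
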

In the proof-of-concept exercise below, the high-price region is interpreted abstractly as a firm-profit region.

\subsection{Stock options with regulatory signalling}\label{subsec:StockOptionsWithRegSignals}
In this section, we apply \Cref{prop:RegulatorAbstractOption} in the context of what we now know about the firms capital structure embedded with regulator signals. Where possible we suppress the $\omega$ notation since it is acknowledged by the probability distributions in the formulae that follow.  We retain \Cref{assum:RegValueFunction} as it pertains to signal-dependent firm value and debt. With few exceptions, the assumptions below are consistent with \citet[pg.~450]{Merton1974}.

\tab We note in passing that regulator signals are a source of market friction that affects capital structure. Since the firm responds to the signal(s) it receives from the regulator, let $V(s(\theta))$ be the value of the firm. The notation here is slightly more general than \eqref{eq:RegulatorSignalEmbededCapStructure}. Suppose that the regulator's potential ``managerial compensation" with the firm would be the equity stake $\alpha V,\;0<\alpha <1$, upon a career change. If $K\in X$ is the ``value" of the regulators compensation package in her current position as regulator, then [s]he may want to exercise a career [call] option at time $t$. The abstract topological lifting in \cref{fig:RegulatorCareerOptionFig01} in \cref{apx:AbstractMechanismDesign} shows that (suppressing $\omega$) the choice function at time $t$ reduces to
\begin{equation}
   f(t;\theta)=(\alpha V(t,s(\theta))-K,0)^+
\end{equation}
More formally, that abstract [call] option compensation scheme takes the form
\begin{align}
   C(t;s(\theta^*)) &=\max_\theta(\alpha V(t;s(\theta))-K,0)^+\label{eq:RegulatorCareerCallOption}
   \intertext{or in the alternative}
   C(\tau^*;s(\theta)) &=\sup_{\tau\in\mathcal{T},\theta\in\Theta}(\alpha V(\tau,s(\theta))-K,0)^+\label{eq:RegAbstractOption}
\end{align}
where, $\theta^*$ is the type of regulator who maximizes her option, $\tau\in\mathcal{T}$ is a stopping time and $\tau^*$ is the optimal stopping time in the set $\mathcal{T}$ of all possible quitting times for the regulator. The foregoing is summarized in the following
\begin{lem}[Regulator's career call option]\label{lem:RegulatorCareerCallOption}
   The regulator's career call option $C(t,s(\theta))$ increases when embedded regulatory signals increases firm value $V(t,s(\theta))$ and it is evaluated as $C(\tau^*;s(\theta^*)) =\sup_{\tau\in\mathcal{T}}\max_{\theta\in\Theta}(\alpha V(\tau,s(\theta))-K,0)^+$. \qed
\end{lem}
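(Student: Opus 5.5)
The lemma makes two claims: a monotonicity claim and a valuation (representation) claim. I would prove them separately.

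\textbf{Monotonicity.} Fix $t$ and regard the payoff as the composition $\theta\mapsto V(t,s(\theta))\mapsto (\alpha V-K)^+$. The map $x\mapsto(\alpha x-K)^+$ is nondecreasing and convex in $x$, because $\alpha>0$. On $\{\alpha V>K\}$ it is strictly increasing with slope $\alpha$.

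If a signal change raises firm value, so that $V(t,s(\theta'))\ge V(t,s(\theta))$, then $C(t,s(\theta'))\ge C(t,s(\theta))$. The inequality is strict whenever the option is in the money. In differential form, under \Cref{assum:RegValueFunction}~A.2 ($V_\theta>0$) and away from the kink $\alpha V=K$,
\begin{equation*}
\partial_\theta C=\alpha\,\mathds{1}\{\alpha V>K\}\,V_\theta\ge 0 .
\end{equation*}
At the kink I would pass to one-sided derivatives, or read the conclusion as weakly increasing. I will state the result as ``weakly increasing, strictly in the money,'' since that is the honest content of ``increases.''

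\textbf{Valuation.} I would start from \Cref{prop:RegulatorAbstractOption}, taking the identification $f(\omega,\theta)\cong(C\circ s)(\omega,\theta)$ as given. The regulator's unobservable choice function then reduces to the pointwise payoff $f(t;\theta)=(\alpha V(t,s(\theta))-K)^+$. The regulator controls two things: her type or signal $\theta\in\Theta$, and her quitting time $\tau\in\mathcal{T}$. Her value is therefore $\sup_{\tau,\theta}$ of this payoff, as in \eqref{eq:RegAbstractOption}.

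The step to justify is writing this joint supremum as the iterated expression $\sup_{\tau}\max_{\theta}$. This is a general fact: for any function $g$, $\sup_{(\tau,\theta)}g=\sup_\tau\sup_\theta g$. The inner supremum is a maximum provided, for each $\tau$, the map $\theta\mapsto V(\tau,s(\theta))$ attains its supremum on $\Theta$. I would guarantee this by assuming $\Theta$ is compact and $\theta\mapsto V(\tau,s(\theta))$ is upper semicontinuous. The positive-part map is continuous and monotone, so it preserves both the maximizer and upper semicontinuity. The maximizer is then $\theta^*(\tau)$, consistent with \eqref{eq:RegulatorCareerCallOption}. By monotonicity, $\theta^*$ coincides with the firm-value maximizer whenever the option can be in the money.

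For the outer step, if $\mathcal{T}$ is a set of stopping times with the supremum attained at some $\tau^*$, the value equals $C(\tau^*;s(\theta^*))$. In the stochastic reading, $C$ is the Snell-envelope value of an American-style claim, and $\tau^*$ is its first entry into the stopping region.

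\textbf{Main obstacle.} The hardest step is attainment and measurability: interchanging the $\theta$-maximization with the stopping problem when $\theta^*$ may depend on $\omega$. If attainment fails, I would fall back to $\varepsilon$-optimal pairs $(\tau_\varepsilon,\theta_\varepsilon)$. This still yields the displayed value as a supremum, and $\tau^*$ and $\theta^*$ then denote limits of optimizing sequences. Everything else follows from the monotonicity and convexity of $(\alpha x-K)^+$.
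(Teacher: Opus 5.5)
Your proposal follows the paper's route, which obtains the lemma as a summary of its construction. Proposition~\ref{prop:RegulatorAbstractOption} ($f\cong C\circ s$) reduces the choice function to $(\alpha V(t,s(\theta))-K)^+$, the regulator then optimizes over type and quitting time as in \eqref{eq:RegulatorCareerCallOption}--\eqref{eq:RegAbstractOption}, and monotonicity follows from the payoff. What you add, namely weak versus strict monotonicity and compactness with upper semicontinuity to justify writing $\max_\theta$, is rigor the paper leaves implicit; note, though, that the displayed formula is a pathwise supremum over $\tau$, not a Snell-envelope value, which would require an expectation.
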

\noindent Of course $V(t,s(\theta)|\mathcal{F}_t)$ is the value of the firm at time $t$ given the ``filtration" of information $\mathcal{F}_t$, and it has to be discounted from some ``terminal date" $T$.  To keep it simple, assume that the option is a European type\footnote{For the purpose of this paper, the added complexity of American style option pricing may cause us to loose sight of the forest for the trees. See, e.g., \citet{Myneni1992} for a taxonomy of American option pricing.} so that it will be exercised at a ``known" terminal date, and that the strike price is $K$. The regulator exercises a call option based on her type with the premium as indicated. It should be noted in passing that on the other side of the regulator's call option contract is a third party guarantor who holds an offsetting put contract. Namely, the regulating agency or taxpayers. See, e.g., \citet{Merton1977}. Since firm value is reflected by the price of the firms stocks, the regulator's action is functionally equivalent to a stock option.\footnote{See \citet{HallMurphy2000}.} In this case, the stock price reflects the signals from the regulator. More on point, from \eqref{eq:RegulatorSignalEmbededCapStructure} we find (suppressing $t$) that the firm's equity is given by:
\begin{align}
   E(s(\theta)) &= V(s(\theta))-D(s(\theta))\\
   \intertext{So the regulator's virtual call option is based on}
   \alpha E(s(\theta))^+ &= \max\{\alpha V(s(\theta))-\alpha D(s(\theta)),0\}\label{eq:RegulatorCallOptionOnFirmEquity}\\
   \intertext{This implies that the strike price in \eqref{eq:RegulatorCareerCallOption} based on the value of the regulator's pre-industry compensation package, must be compared to the firm's debt before [s]he exercises her option. Specifically, we must consider whether}
   K-\alpha D(s(\theta))\lessgtr 0\label{eq:RegulatorPreIndustryPackageCompare}
\end{align}
This implies that we also need a model of dynamic stock prices with information signalling. That issue is addressed in the sequel.

\subsection{Valuation of regulator career option and capital structure of the firm}\label{subsec:RegulatorOptionCapitalStructure}
Consistent with \eqref{eq:RegulatorCallOptionOnFirmEquity}, \citet[pp.~453-454]{Merton1974} showed that the option value of the equity in a firm with value $V$ and debt $D$ is given by
\begin{subequations}\label{eqgrp:FirmOptionValue}
  \begin{align}
     f(V,\tau) &= V\Phi(x_1)-D\exp(-r\tau)\Phi(x_2)\label{eqgrp:FirmOptionValueEqu}\\
     x_1 &= \frac{\log\tfrac{V}{D}+(r+\tfrac{\sigma^2}{2})\tau}{\sigma\sqrt{\tau}}\label{eq:FirmOptionXOne}\\
     x_2 &= x_1-\sigma\sqrt{\tau}\label{eq:FirmOptionXTwo}
  \end{align}
\end{subequations}
where $f(V,\tau)$ is option value of firm equity, $r$ is a risk free rate, $\tau=T-t$ for exercise date $T$ for a European call option priced at time $t$; and $\Phi$ is the cumulative normal. From the outset we note that
\begin{align}
   \alpha f(V,\tau) &= \alpha V\Phi(x_1)-\alpha D\exp(-r\tau)\Phi(x_2)\\
   \intertext{The right hand side of that equation is isomorphic to a call option}
   C(\cdot) &=(\alpha V-\alpha D,0)^+\\
   \intertext{So the regulator's valuation rests on comparing his current compensation package $K$ to the psuedo-strike price $\alpha D$ that depends on the firm's debt. That is, the exercise of her career option rests on the evaluation}
   (\alpha V-K,0)^+&\gtrless(\alpha V-\alpha D,0)^+
\end{align}
Here, we assume that the regulator knows $V, K$ and $\alpha$ but she does not know what $D$ will be on the strike date. Thus, in \eqref{eq:RegulatorPreIndustryPackageCompare} if the regulator believes that $K<\alpha D$ so that $(\alpha V-K,0)^+ > (\alpha V-\alpha D,0)^+$, then her valuation of the firm is greater than its actual value. Arguably, at this stage the regulator would be interested in legislation that induces the firm to report its debt truthfully. So she would be inclined to exercise her career option to quit and join the firm or she would stay otherwise. What is more, the regulator now has incentive to design a legislative mechanism that increases firm leverage. So she looks more favorably at levered firms because a highly levered firm increases the value of her career option. This prediction is consistent with the theoretical possibility that privately controlled firms use leverage strategically to obtain better regulatory outcomes. See, e.g., \citet{BortolottiCambiniRondiSpiegel2011}. Therefore, in the sequel we assume that $K < \alpha D$.

\subsubsection{Regulator career option pricing with signalling}\label{subsubsec:RegulatorOptionPriceWithSignals}
Because $\theta$ is unobservable, its presence in firm value, i.e. stock price, implies that it may follow a hidden Markov process. To account for this possibility, we turn to the information-based stock price model of \citet{Guo2001} and \citet{Shepp2002} on the canonical probability measure space $(\Omega,\mathcal{F},\mathrm{\mathrm{F}},P)$ with filtration $\mathrm{F}$.  In the context of our model, the Shepp-Guo model is parameterized as
\begin{equation}
   \frac{dV(t,\omega)}{V(t,\omega)} =\mu(s(t,\theta))dt+\sigma(s(t,\theta))dW(t,\omega)\label{eq:HiddenMarkovSignal}
\end{equation}
where the regulatory signal $s(t,\theta)$ is a hidden Markov process. To simplify matters we assume that $s(t,\theta)=s(\theta)$. That is, the signal is not time dependent. In that case, the problem is reduced to one of an American style option where the regulator embeds a deterministic signal in the value of the firm, and she can exercise her option to join the hitherto regulated firm at any time. Thus, firm value dynamics in \eqref{eq:HiddenMarkovSignal} is given by
\begin{equation}
    \frac{dV(t,\omega)}{V(t,\omega)} =\mu(s(\theta))dt+\sigma(s(\theta))dW(t,\omega)\label{eq:NoHiddenMarkovSignal}
\end{equation}
More on point, \citet[pg.~1373]{Shepp2002} and \citet{Guo2001} introduce evidence that information asymmetry is dispositive of volatility dynamics, and the activity of insiders leads to increased volatility. In which case we assume that regulator signals increases volatility of firm value. That is
\begin{assumption}\label{assum:RegSignalVolatility}
   $\sigma(s(\theta)) > \sigma$\hfill $\Box$
\end{assumption}
\noindent This assumption is consistent with \citet{BoyleJhaKennedyTian2011}, who find that high volatility firms should use more stock options to compensate their executives. Furthermore, \citet{CadoganCole2014} show how former regulators embed signals in capital structure and use index option compensation contracts to increase compensation.  Thus, using the formulae in \eqref{eqgrp:FirmOptionValue}, we find that
\begin{subequations}\label{eqgrp:FirmOptionValueWitSignal}
   \begin{align}
      f(V(s(\theta)),\tau) &= V(s(\theta))\Phi(x_1(s(\theta)))-D(s(\theta))\exp(-r\tau)\Phi(x_2(s(\theta)))\label{eqgrp:FirmOptionValueEquWitSignal}\\
      x_1(s(\theta)) &= \frac{\log\tfrac{V(s(\theta))}{D(s(\theta))}+(r+\tfrac{\sigma(s(\theta))^2}{2})\tau}{\sigma(s(\theta))\sqrt{\tau}}\\
      x_2(s(\theta)) &= x_1(s(\theta))-\sigma(s(\theta))\sqrt{\tau}
  \end{align}
\end{subequations}
\noindent Thus we must compare the value of the option on the firm with embedded regulator signal to the value without signal. To do that we turn to option Greeks by examining the vega $(\vartheta)$ of the option on the firm.\footnote{See \citet[pg.~361]{Hull2006} for derivation of formula for vega.} That is, for the mean zero normal distribution function $\phi$, let
\begin{align}
   \vartheta(0) =\frac{\partial f(V,\tau)}{\partial\sigma}&=V\phi(x_1)\sqrt{\tau}\label{eq:VegaBS}\\
   \vartheta(s(\theta))=\frac{\partial f(V(s(\theta)),\tau)}{\partial\sigma(s(\theta))}&=V(s(\theta))\phi(x_1(s(\theta)))\sqrt{\tau}\label{eq:VegaSG}
\end{align}
According to \Cref{assum:RegSignalVolatility} information based vega is larger than plain vanilla vega.\footnote{\citet{ChesneyStrombergWagner2012} also introduced the concept of ``asset volatility vega" which is higher than plain vanilla ``equity volatility vega".} Thus,
\begin{align}
   \vartheta(s(\theta))>\vartheta(0)&\Rightarrow V(s(\theta))\phi(x_1(s(\theta)))\sqrt{\tau}>V\phi(x_1)\sqrt{\tau}\\
   \Rightarrow\frac{V(s(\theta))}{V}&>\frac{\phi(x_1)}{\phi(x_1(s(\theta)))}
\end{align}
Under \Cref{lem:RegulatorCareerCallOption} a more levered firm with embedded regulator signals enhances regulator career option so we expect $V(s(\theta))>V$. Thus, we have
\begin{align}
\dfrac{V(s(\theta))}{V}&>\max\left\{1,\,\frac{\phi(x_1)}{\phi(x_1(s(\theta)))}\right\}\\
   \intertext{The right hand side of the equation above is satisfied if}
   \frac{\phi(x_1)}{\phi(x_1(s(\theta)))}>1 &\Rightarrow \exp\Biggl(-\Biggl(\frac{x_1^2}{2}-\frac{x_1(s(\theta))^2}{2}\Biggr)\Biggr)>1\\
   &\Rightarrow -x_1(s(\theta))<-x_1 \;\;\text{and}\;\; x_1(s(\theta))>x_1
\end{align}
Following the distance-to-default interpretation of option-based credit models, $-x_1$ is an inverse distance-to-default index. The analysis below provides a theoretical explanation for why regulator-induced volatility can lower that distance and raise tail-risk exposure.
\subsubsection{Bankruptcy warning signals: Firm vega and leverage}\label{subsubsec:WarningSignalsVegaAvgLev}
At this juncture we assume the static theory of capital structure (with perpetual debt) \citep[pp.~437-439]{RossWesterfieldJrdan2017}, stated as a lemma below, in order to motivate the internal consistency of our probability of firm bankruptcy.
\begin{lem}[Tradeoff theory of capital structure]\label{lem:TradeOffTheoryCapitalStructure}
  Let $V_L$  be the value of the levered firm, and $V_U$ its unlevered value; $T_c$ be corporate tax and $D$ be the firm's debt, $r_D$ be the interest rate on debt, $b_p(\theta)$ be the probability of bankruptcy with regulator signal $\theta$, and $C$ be the cost of firm distress, thus $V_L(\theta)=V_U+T_cD-r^{-1}_Db_p(\theta)\times C$.\qed
\end{lem}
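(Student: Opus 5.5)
The lemma is the textbook static tradeoff relation with perpetual debt. I would derive it by valuing the firm's cash-flow streams as perpetuities. The only change from the standard argument is that the per-period bankruptcy probability is indexed by the regulator signal $\theta$.

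\emph{Step 1: set up the cash flows and the unlevered firm.} Take the unlevered firm with perpetual expected operating cash flow $X$ and unlevered cost of capital $r_U$, so $V_U=X(1-T_c)/r_U$. The levered firm issues perpetual debt $D$ at rate $r_D$. Each period it pays interest $r_DD$, which is deductible. Each period, with probability $b_p(\theta)$, it incurs a distress cost $C$. Because the signal enters only through $b_p(\theta)$, I hold $V_U$, $D$, $r_D$, $T_c$ and $C$ fixed in $\theta$ for this lemma. This is consistent with treating \Cref{lem:TradeOffTheoryCapitalStructure} as a partial-equilibrium accounting identity, separate from the signal-dependent $D(\theta)$ and $V(\theta)$ of \Cref{assum:RegValueFunction}.

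\emph{Step 2: value the tax shield.} Split the cash flow to all claimholders each period into three pieces: the unlevered after-tax flow, the tax saving $T_cr_DD$, and minus the expected distress cost $b_p(\theta)C$. Value-additivity then gives $V_L=V_U+PV(\text{tax shield})-PV(\text{distress})$. The tax shield is a perpetuity carrying the same risk as the debt, so I discount it at $r_D$. This gives $T_cr_DD/r_D=T_cD$, which is the Modigliani--Miller proposition with taxes.

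\emph{Step 3: value the distress costs.} The expected distress cost per period is $b_p(\theta)C$, a stationary perpetuity. Discounting it at the debt rate $r_D$ gives $r_D^{-1}b_p(\theta)C$. Combining the three terms yields $V_L(\theta)=V_U+T_cD-r_D^{-1}b_p(\theta)C$.

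\emph{Main obstacle.} The delicate step is justifying the discount rate in Step 3. Expected distress costs are risky and not obviously of debt-like risk. I would handle this by assumption, following \citet[pp.~437-439]{RossWesterfieldJrdan2017}: default-related flows are priced at the debt rate, since bondholders bear the distress risk. I would also require $b_p(\theta)$ to be stationary across periods, which is where the simplification $s(t,\theta)=s(\theta)$ from \eqref{eq:NoHiddenMarkovSignal} is invoked.
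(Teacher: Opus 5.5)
The paper gives no derivation of this lemma. It says it will \emph{assume} the static tradeoff theory with perpetual debt of \citet[pp.~437-439]{RossWesterfieldJrdan2017} and records the relation as a lemma, so the statement works as a maintained assumption rather than a proved result. Your perpetuity valuation is the standard argument behind that citation, and it is correct under the assumptions you state. What it adds is an explicit list of the modeling choices that produce the exact form $r_D^{-1}b_p(\theta)C$.

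The most consequential of these is not the discount rate you flag. It is your treatment of distress as a recurring, non-absorbing cost: the firm survives each distress event and keeps both its debt and its tax shield forever. Under the more literal reading, $b_p(\theta)$ is a per-period probability of absorbing bankruptcy. Then the distress term would be $b_p(\theta)C/(r_D+b_p(\theta))$ and the tax-shield term would fall strictly below $T_cD$. The lemma would then hold only as an approximation when $b_p(\theta)$ is small relative to $r_D$, so you should state the non-absorbing assumption explicitly.

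Finally, freezing $D$, $r_D$ and $V_U$ in $\theta$ is harmless for the identity, which holds pointwise in $\theta$. It is also what makes the paper's next inference valid. With $D$ fixed, $\partial V_L/\partial\theta=-r_D^{-1}C\,\partial b_p/\partial\theta$, so rising firm value is equivalent to falling bankruptcy probability. A signal-dependent $D(\theta)$ would add a term $T_cD_\theta$ that breaks this equivalence.
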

\cref{fig:RegulatorCareerOptionOptCapFig} is a typical sketch of a levered firm's value when faced with a tax shield and expected cost of firm distress under the static theory of capital structure with perpetual debt. Firm value is increasing in regulator signal, i.e., $\partial V_L/\partial\theta >0$, prior to the optimal capital structure $D^\star/E^\star$, and decreases thereafter. This implies that $\partial b_p(\theta)<0$ when firm value is increasing and $\partial b_p(\theta)/\partial\theta>0$ when firm value is decreasing. So regulator induced probability of bankruptcy is countercyclical.

The analysis that follows is based on \citet{CadoganCole2014}. The foregoing arguments show that the information based vega for the option value of the firm is dispositive of regulator incentive to leverage regulated firms. See also \citet{ArmstrongVashishtha2012}.
\begin{figure}[!htbp]
\centering
   \begin{minipage}[h]{0.95\linewidth}
      \captionof{figure}{Optimal Capital Structure Under Static Theory with Perpetual Debt}
      \centerline{\includegraphics[scale=1,height=12cm,width=11cm]{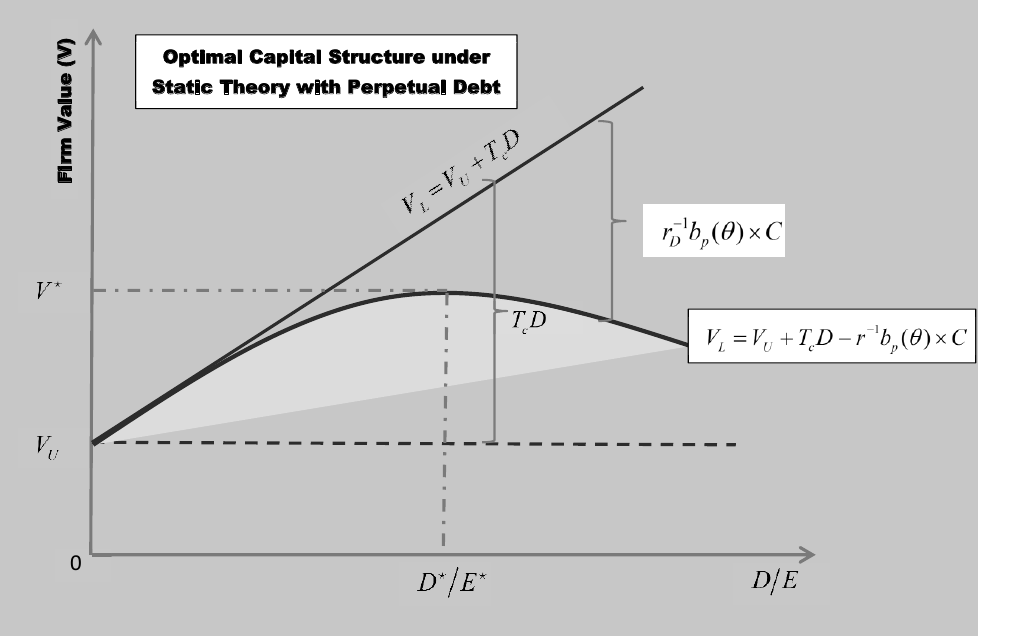}}
      \label{fig:RegulatorCareerOptionOptCapFig}
   \end{minipage}
\end{figure}
\noindent According to Kolmogorov Inequality, see, e.g., \citet[Thm.~8.3.1,~pg.~249]{AthreyaLahiri2006}, under \Cref{assum:RegSignalVolatility}, if $\varphi$ is the lower tail fraction of total tail probability, then for some threshold value $\lambda$, and assuming maximal probability thresholds, we have the nonparametric result
\begin{align}
    &b_p(\theta)=\text{Pr}\Bigl(\max_\theta\{x_1(s(\theta))\}<-\lambda\Bigr)=\varphi\frac{\text{Var}(x_1(s(\theta)))}{\lambda^2}\\
    &=\varphi\frac{\tilde{\sigma}^2_{x_1}(\theta)}{\lambda^2}>\varphi\frac{\text{Var}(x_1)}{\lambda^2}=\varphi\frac{\tilde{\sigma}^2_{x_1}}{\lambda^2}\label{eq:RegValueAtRisk}
\end{align}
where $\tilde{\sigma}^2_{x_1}(\theta)$ and $\tilde{\sigma}^2_{x_1}$ are the variances of $x_1(s(\theta))$ and $x_1$ respectively. The result in \eqref{eq:RegValueAtRisk} is important because it says that the ``large deviation probability" for firm value at risk included in $x_1(\omega_b,s(\theta))$, i.e. tail risk in bad states $\omega_b\in\Omega$, is greater than the corresponding probability if there was no regulatory signals or ``regulatory capture`` embedded in capital structure. To see this rewrite $V(s(\theta))$  in \eqref{eqgrp:FirmOptionValueWitSignal} as $V(\theta)$ so that after algebraic reduction we have
\begin{align}
   &\lim_{\sigma(\theta)\rightarrow\infty}\text{Pr}\Bigl\{\max_\theta V(\theta)<\max_\theta D(\theta)\exp\Bigl(-\lambda\sigma(\theta)\sqrt{\tau}-\Bigl(r+\tfrac{\sigma^2(\theta)}{2}\Bigr)\tau\Bigr)\Bigr\}>\varphi\frac{\tilde{\sigma}^2_{x_1}}{\lambda^2}\\
   &\Rightarrow\text{Pr}\{\max_\theta V(\theta)<0\}>\varphi\frac{1}{\lambda^2\sigma^2\tau}\text{Var}\Bigl(\ln\frac{V}{D}\Bigr),\quad |\lambda|<\left(\dfrac{\text{Var}\Bigl(\ln\frac{V}{D}\Bigr)}{\sigma^2\tau}\right)^{\frac{1}{2}}\label{eq:BankruptcyCondition}
\end{align}
where $\tilde{\sigma}^2_{x_1}=\tfrac{1}{\sigma^2\tau}\text{Var}(\ln\frac{V}{D})$. The denominator is $\sigma^2\tau$, not $D^2\tau$, because $x_1$ normalizes the log-leverage term by $\sigma\sqrt{\tau}$. The relationship in \eqref{eq:BankruptcyCondition} is equivalent to a bankruptcy condition involving leverage $\dfrac{D}{V}$. It says that if regulator signals $\theta$ induce too much volatility, i.e. uncertainty, in firm value $V(\theta)$, then the probability that the value of the firm is negative, i.e. it goes bankrupt when $V(\theta)<0$, is greater than a given threshold without regulator signals. Thus we have proven the following
\begin{prop}[Bankruptcy condition]\label{prop:RegCaptureValueAtRisk}
    Assume that regulatory capture exists so that regulators embed regulatory signals $\theta$ in firm capital structure. Then firm distance to default $x_1(\omega_b,\theta)$ in bad states of nature is greater than $x_1(\omega_b)$--the value it would be in non-regulatory capture regimes--and the probability of bankruptcy is higher than it would be in the absence of regulator signals. \hfill $\Box$
\end{prop}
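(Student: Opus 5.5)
The plan is to make the proposition precise in the Merton/Shepp--Guo setting of \eqref{eqgrp:FirmOptionValueWitSignal} and derive both conclusions from the variance comparison already set up in \eqref{eq:RegValueAtRisk}. The comparison is between the standardized index $x_1(s(\theta))$ and its no-capture counterpart $x_1$. In bad states $\omega_b$, where $\log(V/D)<0$, the relevant object is the lower tail of these indices.

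\textbf{Step 1: separate the two volatility effects.} Treat $x_1(s(\theta))$ as a random variable through the terminal log-leverage $\log(V(\theta)/D(\theta))$. Under \eqref{eq:NoHiddenMarkovSignal} this is Gaussian with variance proportional to $\sigma(s(\theta))^2\tau$. Two consequences follow.
\begin{itemize}
\item From \eqref{eq:FirmOptionXOne}, $\partial x_1/\partial\sigma = -\log(V/D)/(\sigma^2\sqrt\tau)+\sqrt\tau/2$, which is strictly positive when $\log(V/D)<0$. By \Cref{assum:RegSignalVolatility}, $\sigma(s(\theta))>\sigma$, so $x_1(\omega_b,\theta)>x_1(\omega_b)$. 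This is the first (ordering) claim, and it matches the condition $x_1(s(\theta))>x_1$ derived earlier from the vega comparison \eqref{eq:VegaSG}.
\item The dispersion of the index, $\tilde\sigma^2_{x_1}(\theta)$, is larger than $\tilde\sigma^2_{x_1}$. Here I would invoke \Cref{assum:RegValueFunction}: leverage is increasing in $\theta$, and the signal-dependent terms $D(\theta)$, $V(\theta)$ add variance to $\log(V/D)$ beyond the no-signal benchmark.
\end{itemize}

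\textbf{Step 2: tail probability.} Apply the Kolmogorov inequality as in \eqref{eq:RegValueAtRisk}, using the lower-tail fraction $\varphi$ and the maximal-threshold convention, to get $b_p(\theta)=\varphi\tilde\sigma^2_{x_1}(\theta)/\lambda^2>\varphi\tilde\sigma^2_{x_1}/\lambda^2$. Because this is only an upper bound, I would back it up with the exact Gaussian computation. For a fixed default barrier on $\log(V/D)$ below the mean, $\Pr(\log(V_T/D)<-c)=\Phi(\cdot)$, and the argument of $\Phi$ increases in $\sigma$. So the default probability is monotone in $\sigma(s(\theta))$, which gives a rigorous version of the second claim. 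Finally, take the limit $\sigma(\theta)\to\infty$ as in \eqref{eq:BankruptcyCondition}, restricting $\lambda$ to the stated range so the bound is nontrivial, and obtain the bankruptcy condition in terms of $D/V$.

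\textbf{Step 3: link to the tradeoff theory.} Using \Cref{lem:TradeOffTheoryCapitalStructure}, note that past the optimal capital structure $\partial b_p/\partial\theta>0$. The signal-induced increase in leverage therefore reinforces the volatility effect.

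\textbf{Main obstacle.} The two claims pull in opposite directions: a larger $x_1$ would normally mean a larger distance to default, yet the proposition also asserts higher default probability. I would resolve this by saying precisely that $-x_1$ is the inverse distance-to-default index, and that the default probability is governed by the tail mass of the more dispersed index, not by its level. Making this rigorous probably requires restricting to bad states with $\log(V/D)<0$, together with an explicit assumption that the signal raises $\text{Var}(\log V/D)$, which I would state.
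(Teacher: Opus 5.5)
Your Step 2 follows the paper's own argument: the Kolmogorov bound in \eqref{eq:RegValueAtRisk} is read as an equality, and then the $\sigma(\theta)\to\infty$ limit is taken. For the ordering you replace the paper's vega route, which goes from $\vartheta(s(\theta))>\vartheta(0)$ and $V(s(\theta))>V$ to the density-ratio condition $\phi(x_1)/\phi(x_1(s(\theta)))>1$, with a comparative static in $\sigma$. The exact Gaussian computation is your own addition. The genuine gap is the variance comparison, on which you rest both the strict inequality in Step 2 and your resolution of the ``main obstacle.'' In the model you invoke, $\log(V_T/D)$ has variance $\sigma(s(\theta))^2\tau$, and $x_1(s(\theta))$ divides it by $\sigma(s(\theta))\sqrt{\tau}$. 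Hence $\tilde\sigma^2_{x_1}(\theta)=\tilde\sigma^2_{x_1}=1$: the standardization cancels the volatility effect exactly, the two Kolmogorov bounds coincide, and there is no ``more dispersed index.'' \Cref{assum:RegValueFunction} cannot supply the missing variance, since $L_\theta>0$ shifts the mean of $\log(V/D)$, not its spread. Assuming that the signal raises $\operatorname{Var}(\log V/D)$ is also not enough: you would need it to rise more than proportionally to $\sigma(s(\theta))^2$, which is impossible under \eqref{eq:NoHiddenMarkovSignal} with a deterministic signal. The paper simply asserts this inequality, so here you inherit its gap rather than close it. The imported limit step fails as well. Under lognormal dynamics $V(\theta)>0$ almost surely, so the left side of \eqref{eq:BankruptcyCondition} is zero.

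Your exact Gaussian computation is correct, and it is the part to keep. With $\ell_0=\log(V_0/D)$ and the drift held fixed, $\Pr\{\log(V_T/D)<-c\}=\Phi\bigl(\tfrac{-c-\ell_0-\mu\tau}{\sigma\sqrt{\tau}}+\tfrac{\sigma\sqrt{\tau}}{2}\bigr)$ increases in $\sigma$ whenever $-c<\ell_0+\mu\tau$, and it increases in leverage. That proves the probability claim without Kolmogorov. However, the same substitution into \eqref{eq:FirmOptionXOne} gives $x_1=\tfrac{\ell_0+(\mu+r)\tau}{\sigma\sqrt{\tau}}+Z$ with $Z=W_\tau/\sqrt{\tau}$. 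So in a fixed state $\omega_b$ (fixed $Z$), the signal \emph{lowers} $x_1$ whenever $\ell_0+(\mu+r)\tau>0$, and the leverage increase in \Cref{assum:RegValueFunction} lowers it further. Default risk rises through the level of the index, not its dispersion, so the two conclusions cannot be reconciled the way you propose.

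Your first bullet therefore establishes the ordering only for a different comparison: $\sigma\mapsto x_1$ at a fixed realized $\log(V/D)$, with leverage held fixed. For the ordering, Step 3's leverage effect works against you, at rate $1/(\sigma\sqrt{\tau})$ per unit of $\log(V/D)$. Even that comparison needs repair, because you dropped the $r$ term. The correct derivative is $\partial x_1/\partial\sigma=-\tfrac{\log(V/D)+r\tau}{\sigma^2\sqrt{\tau}}+\tfrac{\sqrt{\tau}}{2}$, so the right bad-state condition is $V<De^{-r\tau}$, not $V<D$. At the paper's $r=0.03$, $\sigma=0.2$, $\tau=1$, the derivative is negative at $\log(V/D)=-0.005$. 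Nor does your ordering match the paper's vega route. Where both indices are negative, $x_1(s(\theta))>x_1$ means $\phi(x_1)/\phi(x_1(s(\theta)))<1$, the opposite of the density-ratio condition used there. A coherent proof has to state the two conclusions as claims about different objects and drop the dispersion argument. One claim is a pointwise comparative static on $\{V<De^{-r\tau}\}$ at fixed leverage; the other is a statement about the law of $\log(V_T/D)$.
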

\begin{rem}
   This proposition should not be interpreted to mean that in a world with no regulation, firm value-at-risk will be lower than if there is regulation. The proposition is based on an assumption of ``regulatory capture``. The absence of regulatory capture though necessary, is not a sufficient condition to support deregulation. \hfill $\Box$
\end{rem}
The leverage-vega diagram in \cref{fig:RegulatorVegaAndAvgLeverage} is a schematic implication of the model rather than a data plot. We assume that the vega for equity-linked compensation is tied to firm performance and that it is rolled over the life of the firm. The relationship between vega and firm leverage $L=\dfrac{D}{V}$ is implied by \eqref{eq:VegaBS} and \eqref{eq:VegaSG}. Thus, critical time points occur where $\vartheta(L(t|\,\sigma(\theta)))=L(t|\,\sigma(\theta))$ in vega-leverage space. Vega attains a maximum for some value of $L$. The roots of the equations in \eqref{eq:VegaBS} and \eqref{eq:VegaSG} can be obtained by superimposing a plot of $\vartheta$ and $L$ on the same schematic diagram. \cref{fig:RegulatorVegaAndAvgLeverage} provides a proof-of-concept sketch of that procedure for a root at $t^\star$.

\tab The sketch suggests that the rate of increase in firm leverage $\tfrac{dL}{dt}$, and the rate of increase of firm vega  $\tfrac{d\vartheta}{dt}$ conditioned on regulator signal dependent risk $\sigma(\theta)$, may be dispositive of bankruptcy as well. It shows that the rate of diminishing returns to vega begins \noindent just after maximum vega $\vartheta_{max}$, at $t^*$, even though the rate of firm leverage is increasing. However, beyond some critical point $(L_c,\vartheta_c)$ (a root of say \eqref{eq:VegaSG}) the firm is so highly levered that it goes bankrupt. So the period $(t^*,t_c]$ is a time when warning signals of possible bankruptcy takes place. These artifacts of vega and leverage are summarized thus:
\begin{prop}[Menu of firm response to leverage induced regulator signals]\label{prop:RegSignalMenuFirmResponse}
   Let $(L(t|\sigma(\theta))$ and  $\vartheta(L(t|\,\sigma(\theta)))$ be time dependent firm leverage and vega respectively for a given risk profile $\sigma(\theta)$ induced by regulator signal $\theta$. Critical time points are given by the solution to $\vartheta(L(t|\,\sigma(\theta)))=L(t|\,\sigma(\theta))$, and induce the following menu of firm response to leverage and regulator signal:
   \begin{enumerate}
       \item[A1] Go leverage window: $\tfrac{d\vartheta}{dt}>\tfrac{dL}{dt}$ and $L<\vartheta$\label{propitem:RegMenuGoForIt}
       \item[A2] Early warning window: $\tfrac{d\vartheta}{dt}<\tfrac{dL}{dt}$ and $L<\vartheta$\label{propitem:RegMenuWarning}
       \item[A3] Crash or bankruptcy window: $\tfrac{d\vartheta}{dt}<\tfrac{dL}{dt}$ and $L>\vartheta$\label{propitem:RegMenuCrash}
   \end{enumerate}
\end{prop}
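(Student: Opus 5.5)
The plan is to treat \Cref{prop:RegSignalMenuFirmResponse} as a classification of the plane curve $t\mapsto (L(t|\sigma(\theta)),\vartheta(L(t|\sigma(\theta))))$ relative to the diagonal $\vartheta=L$. It is not an existence theorem. I would therefore write $h(t)=\vartheta(L(t|\sigma(\theta)))-L(t|\sigma(\theta))$ and read the three windows off the sign of $h$ and of $h'$.

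\emph{Step 1.} Use \eqref{eq:VegaSG} to express vega as a function of leverage. Since $V/D=1/L$,
\[
x_1(s(\theta))=\frac{-\log L+(r+\tfrac12\sigma(\theta)^2)\tau}{\sigma(\theta)\sqrt{\tau}},
\]
and, after normalising by $V$ (or $\alpha V$ for the career option), $\vartheta(L)=\phi(x_1(L))\sqrt{\tau}$. As $L$ increases, $x_1$ decreases monotonically. Hence $\vartheta$ is unimodal in $L$ and peaks where $x_1=0$, that is, at
\[
L_{\max}=\exp\bigl((r+\tfrac12\sigma(\theta)^2)\tau\bigr)^{-1}\cdot\exp(0)\;\text{-type threshold}.
\]
This gives the hump shape in \cref{fig:RegulatorVegaAndAvgLeverage}. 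Under \Cref{assum:RegValueFunction} (A.1), $L_\theta>0$. I interpret signal accumulation over time as making $L(t)$ increasing in $t$.

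\emph{Step 2.} By the chain rule, $\tfrac{d\vartheta}{dt}=\vartheta'(L)\tfrac{dL}{dt}$. For $L<L_{\max}$ we have $\vartheta'(L)>0$. The condition $\tfrac{d\vartheta}{dt}>\tfrac{dL}{dt}$ then reduces to $\vartheta'(L)>1$. This requires $\vartheta$ to be measured in units for which the slope can exceed one; in practice I would fix the normalisation by scaling with $\alpha V$ or $\sqrt{\tau}$. Define $t^*$ as the first time at which $\vartheta'(L(t))=1$, which necessarily falls before the peak at $L_{\max}$. For $t>t^*$ we then have $\tfrac{d\vartheta}{dt}<\tfrac{dL}{dt}$, including the whole range past the maximum, where $\tfrac{d\vartheta}{dt}<0<\tfrac{dL}{dt}$.

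\emph{Step 3.} Locate the crossing. Since $\vartheta(L)\to0$ as $L\to\infty$ while $L$ grows, the intermediate value theorem gives a root $t_c$ of $h$ with $h<0$ afterwards, provided $h>0$ initially. I would justify that initial condition by starting at low leverage below the hump. Because $h'<0$ on $(t^*,\infty)$, the root is unique there.

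\emph{Step 4.} Read off the windows. On $[0,t^*)$ we have $h>0$ and $h'>0$, which is A1. On $(t^*,t_c]$ we have $h\ge0$ and $h'<0$, which is A2. On $(t_c,\infty)$ we have $h<0$ and $h'<0$, which is A3. To connect A3 with distress, I would invoke \Cref{prop:RegCaptureValueAtRisk}: past $L_{\max}$ we have $x_1<0$, which places the firm in the region where the distance to default has been exhausted.

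\emph{Main obstacle.} The comparison $\vartheta=L$ is dimensionally arbitrary. The existence of the regime $\vartheta'(L)>1$, and hence the ordering $t^*<t_c$, depends on the chosen normalisation and on $\sigma(\theta)$ and $\tau$. I would first try to make this precise with the normalisation $\vartheta/(\alpha V)$ and a sufficient condition on $\sqrt{\tau}\,\phi'(x_1)/(\sigma\sqrt{\tau}L)$. If that fails in general, I would state the menu conditionally on $\max_L \vartheta'(L)>1$ and $\vartheta(L_0)>L_0$.
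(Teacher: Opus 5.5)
Your reduction to the signs of $h=\vartheta-L$ and $h'$ is the right way to make the menu precise. The paper itself gives no analytic argument, only the schematic figure. It places $t^\star$ at the vega maximum, but there $d\vartheta/dt=0<dL/dt$, which contradicts its own A1 inequality; your boundary $\vartheta'(L)=1$, strictly before the peak, is the better one.

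\textbf{The gap: the initial condition in Step 3 fails.} Write $\vartheta(L)=c\,\phi(x_1(L))\sqrt{\tau}$ with an arbitrary scale $c>0$. Then $\vartheta'(L)=c\,x_1\phi(x_1)/(\sigma L)$. Since $x_1\sim-\log L/(\sigma\sqrt{\tau})$ as $L\to0$, $\phi(x_1)$ decays like $\exp\bigl(-(\log L)^2/(2\sigma^2\tau)\bigr)$, so both $\vartheta(L)/L\to0$ and $\vartheta'(L)\to0$.
\begin{itemize}
\item Low leverage therefore gives $h<0$ and $h'<0$: a low-leverage firm satisfies exactly the A3 (crash) inequalities. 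The paper's own baseline shows this: $x_1\approx3.24$, so unscaled $\vartheta/V\approx0.002$ against $L=0.55$.
\item With $u=\log L$, one finds $\frac{d^2}{du^2}\log\vartheta'=-\bigl(1/y^2+1/(\sigma^2\tau)\bigr)<0$, where $y=\sigma\sqrt{\tau}\,x_1$. So $\{\vartheta'>1\}$ is an interior interval $(L_1,L_2)$, not an initial segment.
\item Viewed as a function of $L$, $h$ starts at $h(0^+)=0$, falls on $(0,L_1)$, rises on $(L_1,L_2)$, and falls thereafter.
\end{itemize}
Consequently, if the curve rises above the diagonal at all, it crosses it exactly twice: upward at some $L_u\in(L_1,L_2)$ and downward at $L_c>L_2$. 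There are two critical time points, not one. On $(L_1,L_u)$ you have $d\vartheta/dt>dL/dt$ together with $L>\vartheta$, a sign pattern that is not on the menu. And from a low-leverage start, your ``first time at which $\vartheta'(L(t))=1$'' is the entry point $L_1$, not the exit point $L_2$, so $[0,t^*)$ is not an A1 window.

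\textbf{The repair.} Use the conditional version you sketch at the end, but with the right hypothesis and no appeal to low leverage: assume $\vartheta(L(0))>L(0)$, which jointly restricts $c$ and the starting leverage.
\begin{itemize}
\item Because $h<0$ on $(0,L_1]$, this hypothesis forces $L(0)>L_1$. It also already implies $\max_L\vartheta'>1$, so that condition is redundant.
\item If in addition $L(0)<L_2$, the first hitting of $\vartheta'=1$ along the path is $L_2$. Your Steps 2--4 then go through, with $t_c$ the unique downward root.
\item If $L(0)\ge L_2$, the A1 window is empty.
\end{itemize}
The menu is then a classification of the path after the upward crossing only.

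\textbf{Two smaller corrections.} First, setting $x_1=0$ gives $L_{\max}=e^{(r+\sigma(\theta)^2/2)\tau}>1$, not its reciprocal, so for $L<1$ unscaled vega is increasing in leverage. Second, entering A3 does not by itself place the firm past $L_{\max}$, because $L_c$ need not exceed $L_{\max}$. The link to $x_1<0$ in Step 4 therefore needs a separate condition.
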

\begin{figure}[!htb]
\centering
   \begin{minipage}[h]{0.5\linewidth}
      \captionof{figure}{Firm vega $(\vartheta)$ and leverage $L$ for given risk signal $\sigma(\theta)$ with early warning and bankruptcy windows}
      \centerline{\includegraphics[scale=.8]{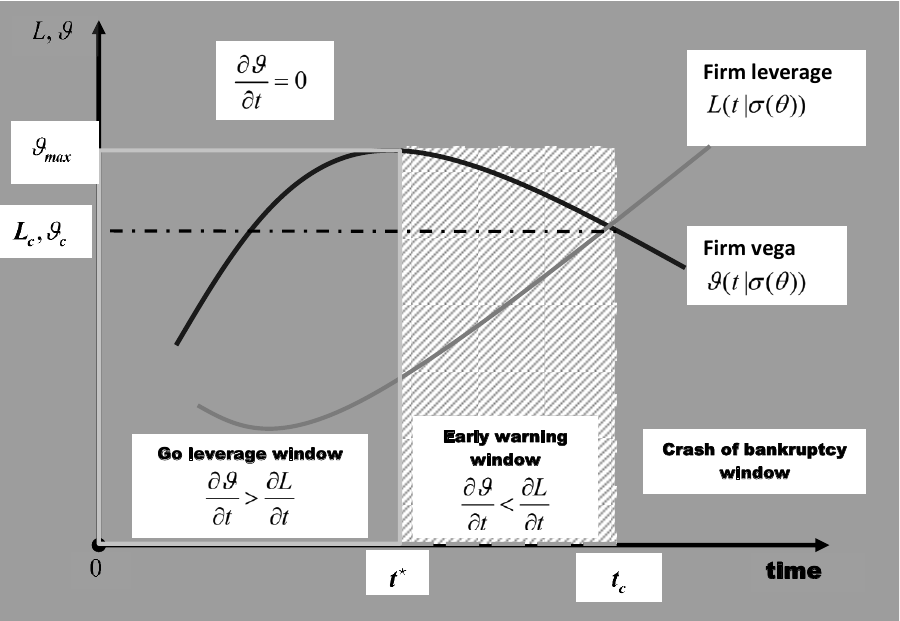}}         \label{fig:RegulatorVegaAndAvgLeverage}
   \end{minipage}
      \begin{flushleft}
         $L_c,\;\vartheta_c$ are the critical points for firm leverage and vega. Beyond that the firm's stock price crashes or it goes bankrupt. At $t^\star$ vega is at a maximum. In time period $(0,t^\star]$ leverage is encouraged. In period $(t^\star,t_c]$ leverage should proceed with caution. At time $t>t_c$ the firm is at risk for bankruptcy.
      \end{flushleft}
\end{figure}
\section{Proof-of-Concept Calibration}\label{sec:ProofOfConcept}
\tab This section provides a proof-of-concept calibration. The purpose is not to estimate the model or to fit an observed sample. The purpose is to show how the theoretical objects in \Cref{prop:RegCareerGoalFirmProfit,lem:RegulatorCareerCallOption,prop:RegCaptureValueAtRisk,prop:RegSignalMenuFirmResponse} can be evaluated in a simple numerical environment.

\tab Normalize initial firm value to $V_0=1$, set the regulator's equity-linked share to $\alpha=0.05$, and let the regulator's current compensation strike be $K=0.030$. Let a regulatory signal $\theta\geq0$ affect firm value, debt, and volatility through
\begin{align}
   V(\theta)&=V_0\{1+0.08\theta\},\\
   D(\theta)&=D_0\{1+0.15\theta\},\\
   \sigma(\theta)&=\sigma_0\{1+0.30\theta\},
\end{align}
where $D_0=0.55$ and $\sigma_0=0.20$. This calibration intentionally imposes $D_\theta/D>V_\theta/V$: debt grows at rate $0.15/(1+0.15\theta)$ while firm value grows at rate $0.08/(1+0.08\theta)$ over the grid. Hence the proof of concept is in the weak-penalty leverage channel of \Cref{assum:RegValueFunction}. The parameter $\rho$ is not separately calibrated in the option-value exercise; it is maintained as small enough that the net term $b\Pi_VV_\theta+(b\Pi_D-\rho)D_\theta$ remains positive. The regulator's career option payoff is
\begin{equation}
   C(\theta)=\bigl(\alpha V(\theta)-K\bigr)^+.
\end{equation}
The leverage ratio is $L(\theta)=D(\theta)/V(\theta)$, and the signal-adjusted Black--Scholes--Merton vega proxy is
\begin{equation}
   \vartheta(\theta)=V(\theta)\phi(x_1(\theta))\sqrt{\tau},
   \qquad
   x_1(\theta)=\frac{\log(V(\theta)/D(\theta))+(r+\sigma(\theta)^2/2)\tau}{\sigma(\theta)\sqrt{\tau}},
\end{equation}
with $r=0.03$ and $\tau=1$. These values are illustrative and are chosen only to display the comparative statics of the model.
\begin{figure}[!htbp]
\centering
\begin{subfigure}[t]{0.49\linewidth}
\centering
\includegraphics[width=\linewidth]{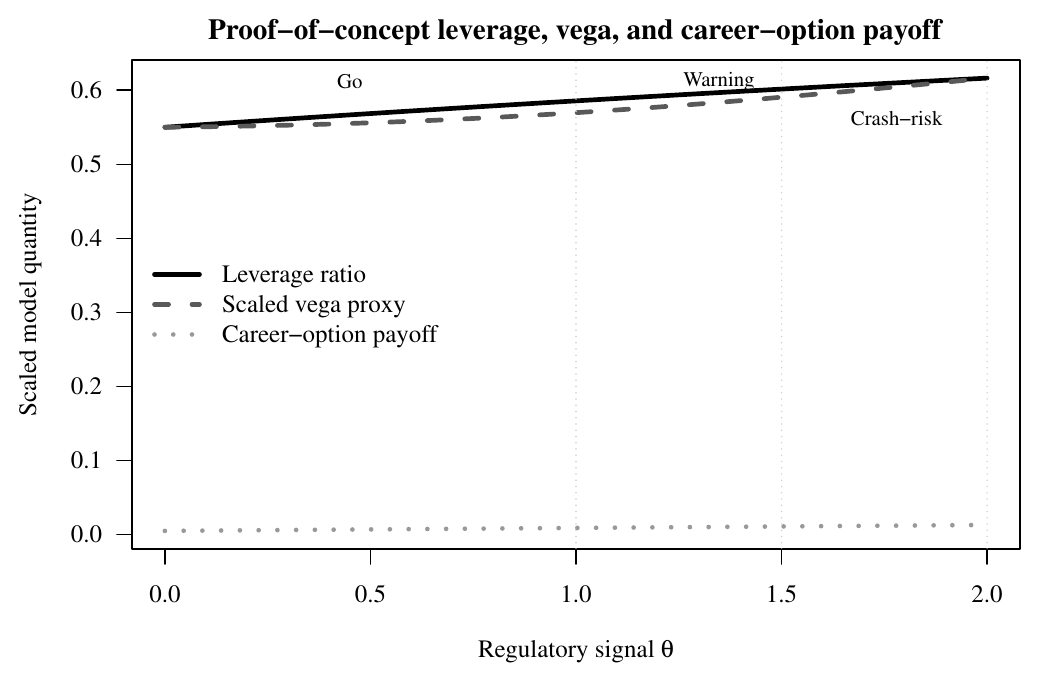}
\caption{Signal path}
\label{fig:ProofConceptLeverageVega}
\end{subfigure}\hfill
\begin{subfigure}[t]{0.49\linewidth}
\centering
\includegraphics[width=\linewidth]{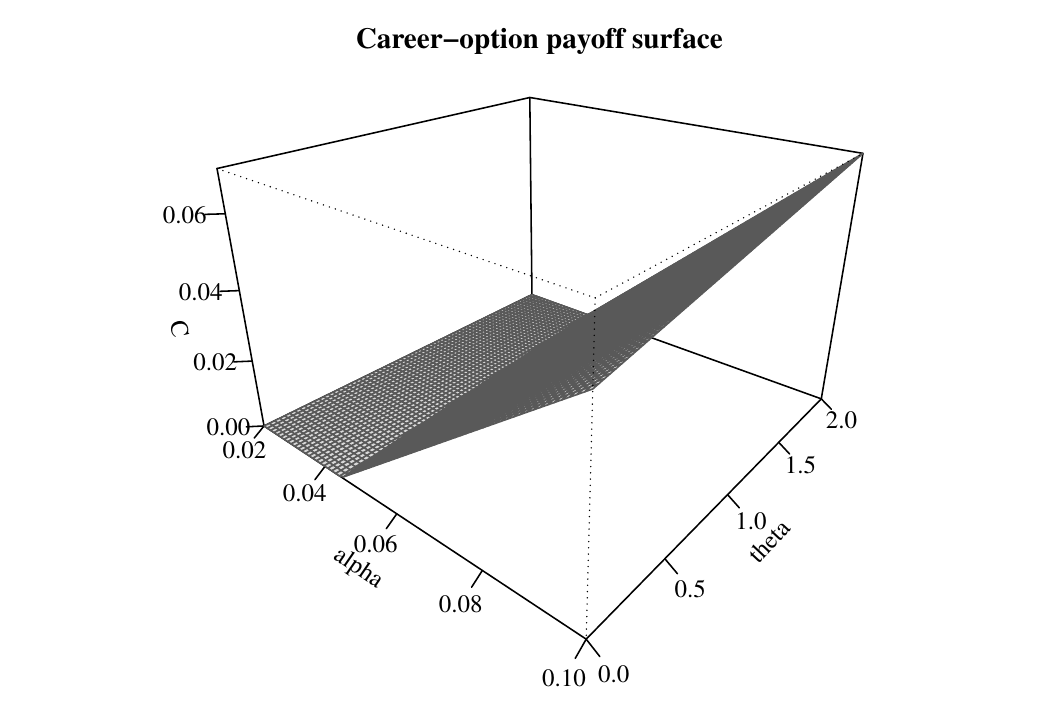}
\caption{Career-option surface}
\label{fig:ProofConceptCareerOptionSurface}
\end{subfigure}
\caption{Proof-of-concept calibration diagnostics}\label{fig:ProofConceptDiagnostics}
\begin{minipage}{0.92\linewidth}
\footnotesize
\textit{Notes:} Panel (a) plots leverage, the scaled vega proxy, and the regulator's career-option payoff along the signal path. Panel (b) plots the career-option payoff over the regulatory signal $\theta$ and the equity-linked compensation share $\alpha$. The plots are illustrative diagnostics for the model, not estimates.
\end{minipage}
\end{figure}

\tab The proof-of-concept classification follows \Cref{prop:RegSignalMenuFirmResponse}. A signal is in the go-leverage region when $d\vartheta/d\theta>dL/d\theta$ and $L<\vartheta$. It is in the early-warning region when $d\vartheta/d\theta<dL/d\theta$ and $L<\vartheta$. It is in the crash region when $d\vartheta/d\theta<dL/d\theta$ and $L>\vartheta$. In a discrete grid, the derivatives can be approximated by first differences. Thus the model can be implemented by evaluating the following table of theoretical objects.

\begin{table}[H]
\centering
\caption{Proof-of-concept leverage-vega classification}\label{tab:ProofConceptCalibration}
\begin{tabular}{lrrrrl}
\hline\hline
Signal & $V(\theta)$ & $D(\theta)$ & $L(\theta)$ & $C(\theta)$ & Region\\
\hline
$\theta=0.0$ & 1.000 & 0.550 & 0.550 & 0.020 & Baseline\\
$\theta=0.5$ & 1.040 & 0.591 & 0.568 & 0.022 & Go leverage\\
$\theta=1.0$ & 1.080 & 0.633 & 0.586 & 0.024 & Go leverage\\
$\theta=1.5$ & 1.120 & 0.674 & 0.602 & 0.026 & Early warning\\
$\theta=2.0$ & 1.160 & 0.715 & 0.616 & 0.028 & Crash-risk window\\
\hline\hline
\end{tabular}
\begin{minipage}{0.90\linewidth}
\footnotesize
\textit{Notes:} The numbers are an illustrative calibration, not estimates. The region labels summarize the qualitative inequalities in \Cref{prop:RegSignalMenuFirmResponse}. The example shows how a regulator-induced signal can raise both the value of the regulator's career option and the firm's leverage-risk exposure.
\end{minipage}
\end{table}

\tab The table illustrates the mechanism. As $\theta$ rises, firm value increases, but debt and volatility increase more aggressively. The regulator's option value rises because the payoff is convex in firm value. At the same time, leverage rises and the vega-leverage relation moves toward the warning region. The proof-of-concept shows the theoretical mechanism: signals can move a firm across leverage-vega regions, while applied work would be needed to determine whether a particular firm or industry followed that path.

\tab The preceding calibration uses a linear signal path. To show how nonlinear signals change the mechanism, consider a quadratic signal map
\begin{equation}
   s_q(\theta)=\theta+\gamma\theta^2,
   \qquad \gamma=0.40,
\end{equation}
and let firm value, debt, and volatility depend on $s_q(\theta)$ rather than directly on $\theta$:
\begin{align}
   V_q(\theta)&=V_0\{1+0.08s_q(\theta)\},\\
   D_q(\theta)&=D_0\{1+0.15s_q(\theta)\},\\
   \sigma_q(\theta)&=\sigma_0\{1+0.30s_q(\theta)\}.
\end{align}
The marginal signal effect is now state dependent, $s_q'(\theta)=1+2\gamma\theta$. Thus the option-value and leverage channels accelerate at high signal intensities. Economically, this means that a modest signal may have a small effect near $\theta=0$, while a stronger signal can move the firm rapidly into the early-warning or crash-risk region.

\tab The quadratic signal is not a Bayesian information signal. It is a reduced-form nonlinear signal-response map, similar in spirit to threshold and smooth-transition specifications used to capture nonlinear policy and state responses in applied dynamics \citep{Hansen1999Threshold,Terasvirta1994STAR,VanDijkTerasvirtaFranses2002STAR,Hamilton2016Regimes}. A Bayesian reputation-learning version, in which firms update beliefs about a regulator's type from repeated signals, is conceptually natural but would require a separate belief-state law of motion and equilibrium learning problem. We therefore leave that extension outside the present proof-of-concept calibration.

\begin{figure}[!htbp]
\centering
\includegraphics[width=0.92\linewidth]{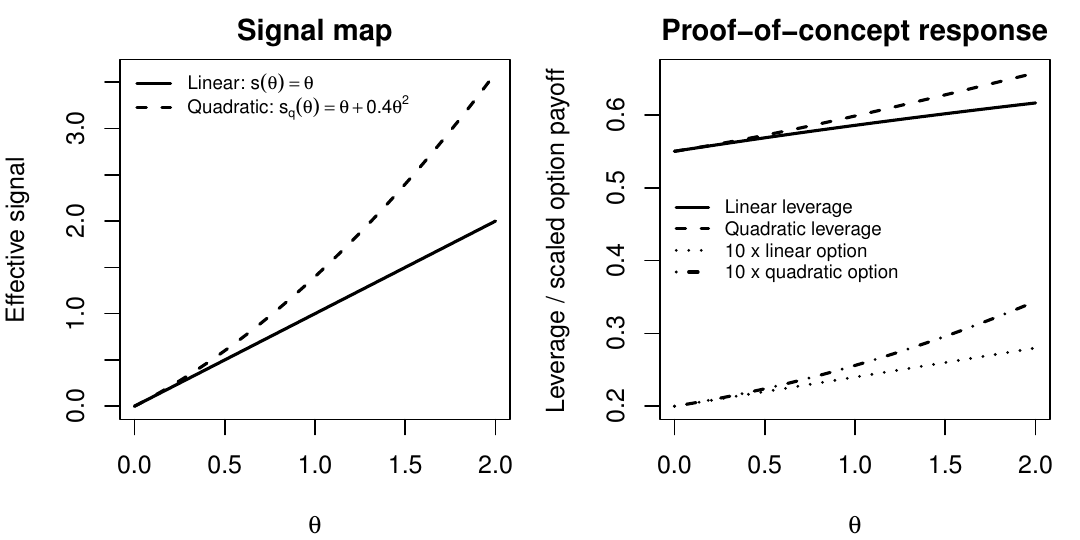}
\caption{Linear and quadratic proof-of-concept signal paths}\label{fig:ProofConceptQuadraticSignal}
\begin{minipage}{0.90\linewidth}
\footnotesize
\textit{Notes:} The left panel compares the linear signal $s(\theta)=\theta$ with the quadratic signal $s_q(\theta)=\theta+0.40\theta^2$. The right panel compares leverage and the scaled career-option payoff under the two signal maps. The quadratic signal amplifies both leverage and private option value at high $\theta$.
\end{minipage}
\end{figure}

\begin{table}[H]
\centering
\caption{Quadratic-signal proof-of-concept classification}\label{tab:ProofConceptQuadraticCalibration}
\begin{tabular}{lrrrrrl}
\hline\hline
Signal & $s_q(\theta)$ & $V_q(\theta)$ & $D_q(\theta)$ & $L_q(\theta)$ & $C_q(\theta)$ & Region\\
\hline
$\theta=0.0$ & 0.000 & 1.000 & 0.550 & 0.550 & 0.020 & Baseline\\
$\theta=0.5$ & 0.600 & 1.048 & 0.600 & 0.572 & 0.022 & Go leverage\\
$\theta=1.0$ & 1.400 & 1.112 & 0.666 & 0.598 & 0.026 & Early warning\\
$\theta=1.5$ & 2.400 & 1.192 & 0.748 & 0.628 & 0.030 & Crash-risk window\\
$\theta=2.0$ & 3.600 & 1.288 & 0.847 & 0.658 & 0.034 & Crash-risk window\\
\hline\hline
\end{tabular}
\begin{minipage}{0.90\linewidth}
\footnotesize
\textit{Notes:} The nonlinear signal uses the same primitive parameters as \cref{tab:ProofConceptCalibration}, but replaces $\theta$ by $s_q(\theta)=\theta+0.40\theta^2$. Relative to the linear calibration, the quadratic signal accelerates the movement into warning and crash-risk regions because debt, volatility, and career-option value rise more quickly at high signal intensity.
\end{minipage}
\end{table}
\subsection{Robustness and quantitative comparative statics}\label{subsec:CalibrationRobustness}
\tab The preceding tables use one normalized path. To make the proof-of-concept less dependent on a single deterministic calibration, this subsection reports five additional numerical checks. These checks are still illustrative rather than estimated. They show how the warning regions move with time-to-maturity, leverage, compensation convexity, stochastic signal realizations, and quitting horizons.

\tab First, because the vega expression in \eqref{eq:VegaSG} depends explicitly on $\tau$, the calibration is repeated for $\tau\in\{1,3,5\}$ years. Longer maturities raise the option's exposure to volatility and alter the signal intensity at which the tail-risk proxy becomes material. Second, the fixed values $\alpha=0.05$ and $D_0/V_0=0.55$ are replaced by a grid with $\alpha\in[0.01,0.20]$ and $D_0/V_0\in[0.30,0.80]$. For each grid point, the critical signal $\theta^\star$ is the first signal intensity at which the distance-to-default proxy falls below $1.5$ and the regulator's private option payoff is positive. Third, the deterministic signal is replaced by a stochastic signal draw, $\theta\sim\Gamma(2.4,0.38)$ truncated to the calibration range, with small multiplicative disturbances to firm value, debt, and volatility. This produces an empirical distribution for distance-to-default rather than a single path.

\begin{figure}[!htbp]
\centering
\begin{subfigure}[t]{0.49\linewidth}
\centering
\includegraphics[width=\linewidth]{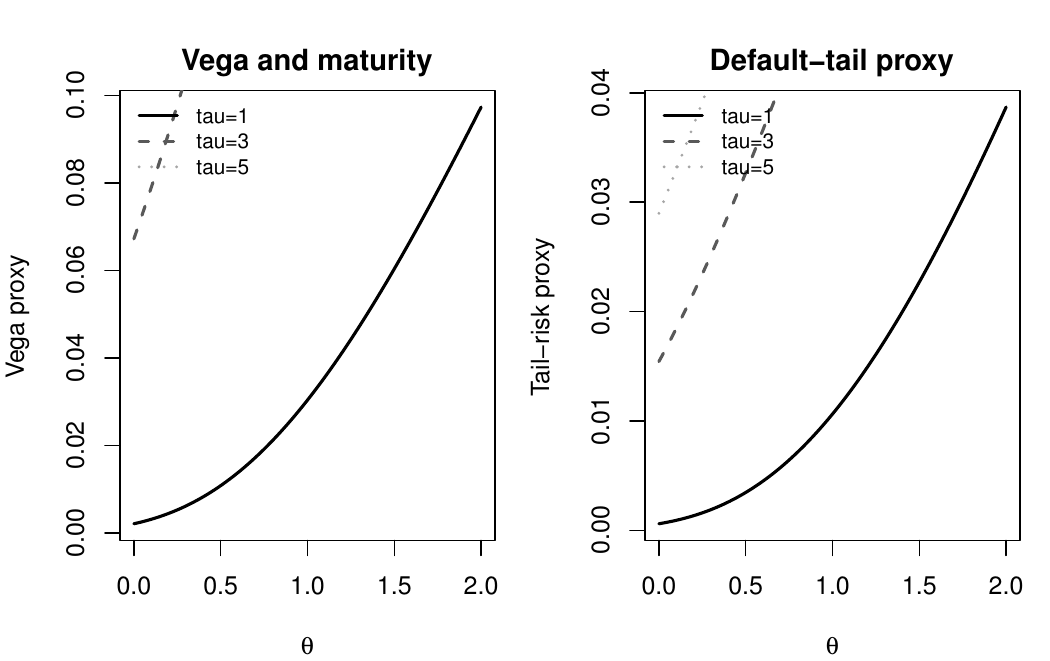}
\caption{Maturity effects}
\label{fig:ProofConceptTauDiagnostics}
\end{subfigure}\hfill
\begin{subfigure}[t]{0.49\linewidth}
\centering
\includegraphics[width=\linewidth]{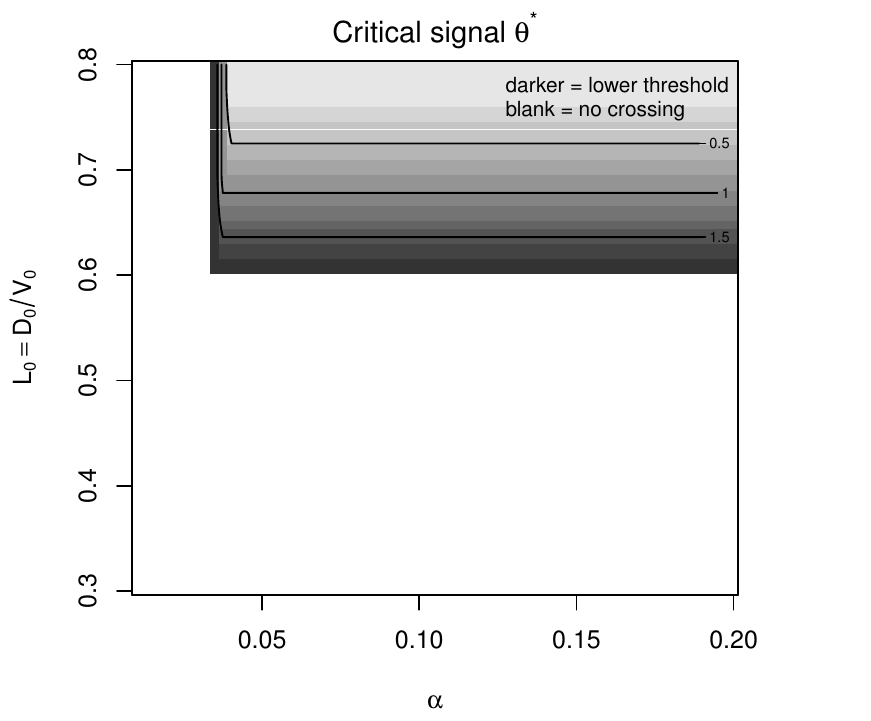}
\caption{Sensitivity grid}
\label{fig:ProofConceptSensitivityHeatmap}
\end{subfigure}
\caption{Maturity and parameter-sensitivity diagnostics}\label{fig:ProofConceptMaturitySensitivity}
\begin{minipage}{0.92\linewidth}
\footnotesize
\textit{Notes:} Panel (a) repeats the vega and tail-risk calculations for $\tau=1,3,5$. Panel (b) reports the critical signal threshold $\theta^\star$ over the equity-linked compensation share $\alpha$ and initial leverage $D_0/V_0$. Darker regions indicate lower critical thresholds. Blank regions do not cross the selected warning threshold on the grid.
\end{minipage}
\end{figure}

\tab The sensitivity grid is the most direct robustness check. It shows that the warning boundary is not a knife-edge implication of the baseline normalization. When initial leverage and volatility are closer to bank-like values, a smaller regulatory signal is sufficient to move the firm toward the warning region. Conversely, when the firm begins with low leverage or the equity-linked compensation share is small, the private option payoff and the leverage-vega warning criterion are less tightly aligned.

\tab The same strike, $K=0.030$, is used in the stopping-horizon panel. This avoids changing the payoff normalization across exercises and makes the horizon grid informative: the career option is in the money for sufficiently long horizons and sufficiently strong signals, so the plot shows how quitting horizon and signal intensity affect option value.

\begin{figure}[!htbp]
\centering
\includegraphics[width=0.88\linewidth]{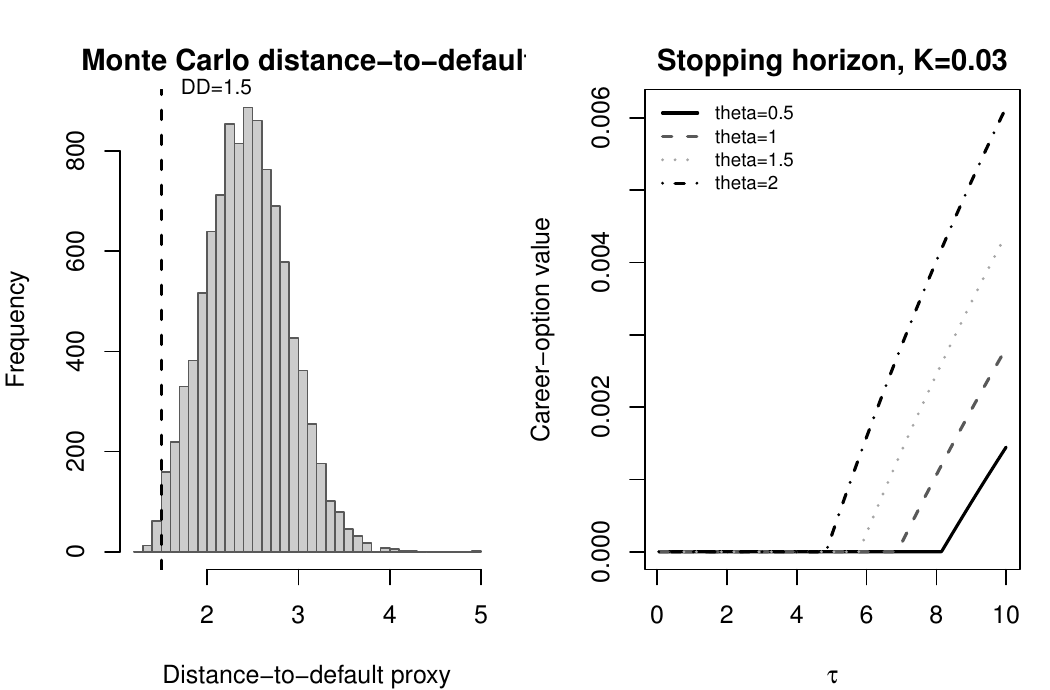}
\caption{Stochastic signal and stopping-horizon diagnostics}\label{fig:ProofConceptMonteCarloStopping}
\begin{minipage}{0.90\linewidth}
\footnotesize
\textit{Notes:} The left panel reports a Monte Carlo distribution of the distance-to-default proxy under stochastic signal realizations and noisy value, debt, and volatility paths. The dashed line marks $DD=1.5$. The right panel reports the regulator's career-option value as a function of the quitting horizon $\tau$ for several signal intensities, using the baseline strike $K=0.030$. The stopping-horizon plot is a grid diagnostic for the optimal quitting-time logic in \Cref{lem:RegulatorCareerCallOption}; it is not a full dynamic programming solution.
\end{minipage}
\end{figure}

\begin{table}[H]
\centering
\caption{Robustness calibration summary}\label{tab:ProofConceptRobustnessSummary}
\input{ProofConceptCalibrationSummary}

\begin{minipage}{0.90\linewidth}
\footnotesize
\textit{Notes:} The bank-like anchor sets $D_0/V_0=0.65$ and $\sigma_0=0.25$, values closer to large regulated financial firms than the normalized baseline. The stochastic signal column uses $10{,}000$ simulated signal draws with small noise added to value, debt, and volatility. The tail-risk proxy is $\Phi(-DD)$, where $DD$ is the distance-to-default proxy generated from $x_1$ in \eqref{eq:FirmOptionXOne}.
\end{minipage}
\end{table}

\tab The robustness exercises give three takeaways. First, the maturity parameter matters: longer option horizons change the vega profile and therefore the warning boundary. Second, industry anchoring matters: the bank-like calibration crosses the $5\%$ tail-risk threshold at a much lower signal intensity than the normalized baseline. Third, stochastic signals convert the deterministic warning-region story into a probability statement; in the reported simulation, about $0.8\%$ of paths fall below $DD=1.5$, while the upper tail of the default-risk proxy remains materially larger than the median. Thus the calibration remains a proof of concept, but it now shows how the model can be stress-tested before moving to a full empirical implementation.
\subsection{Historical regulatory-signal diagnostics}\label{subsec:HistoricalDiagnostics}
\tab Our model does not claim to forecast the exact dates of institutional failures: it is more diagnostic. It identifies settings in which the regulatory signal is absent, softened, or converted into a profit-favoring signal, while the firm moves toward high leverage, high option vega, and lower distance to default. Three recent financial episodes illustrate the three channels.

\tab Lehman Brothers is the regulatory-vacuum case. Before its 2008 failure, Lehman was not supervised by the Federal Reserve as a bank holding company. The Federal Reserve later emphasized that it was not Lehman's supervisor and that Lehman was overseen through the SEC's voluntary Consolidated Supervised Entity program rather than through Federal Reserve consolidated supervision \citep{Bernanke2010Lehman}. The SEC similarly described the CSE program as an attempt to fill a statutory gap for large investment-bank holding companies that lacked a consolidated supervisor, and concluded after 2008 that voluntary regulation did not work \citep{SEC2008CSE}. In the language of the model, the relevant signal was not an affirmative command to take risk but the absence of a binding debt and liquidity penalty. A weak or missing penalty on leverage lets the firm remain in the go-leverage region until market stress reveals that distance to default has become too small.

\tab Silicon Valley Bank is the distorted-signal case. Greg Becker, president and chief executive officer of SVB Financial Group and Silicon Valley Bank, was elected as a Class A director of the Federal Reserve Bank of San Francisco in 2019 \citep{FRBSF2019BeckerBoard}. The point is not that this board role mechanically caused the bank's failure; rather, it illustrates a conflict channel in which the regulated industry is close to the institution that supplies supervisory signals. The Federal Reserve's post-failure review found that SVB's board and management failed to manage interest-rate, liquidity, and governance risks, that supervisors did not fully appreciate the bank's vulnerabilities as it grew, and that the tailoring approach after the Economic Growth, Regulatory Relief, and Consumer Protection Act reduced standards and promoted a less assertive supervisory stance \citep{FederalReserve2023SVBReview}. The same review emphasized that SVB invested largely uninsured deposits in longer-maturity securities and became acutely exposed when interest rates rose. In the model's terms, the regulatory signal permitted a balance-sheet path in which leverage and duration exposure increased while distance to default weakened.

\tab Signature Bank is the converted-signal case. Former Representative Barney Frank, co-author of the Dodd--Frank Act, joined Signature Bank's board after leaving Congress; press coverage at the time of the 2018 regulatory rollback noted that Signature stood to benefit from the higher asset threshold for enhanced regulation and that Frank had received compensation from the bank, while Frank denied that the board role changed his policy views \citep{Stein2018FrankSignature}. The FDIC's post-failure review concluded that Signature's failure was rooted in poor management, rapid and unrestrained growth, inadequate risk management, overreliance on uninsured deposits, and vulnerability to crypto-industry deposit volatility; it also found that supervisory efforts could have been more timely, forward-looking, and forceful \citep{FDIC2023SignatureSupervision}. This is consistent with the model's converted-signal mechanism: a public-regarding regulatory stance can become a softer post-agency signal when the agent's future payoff is tied to the regulated firm's value.

\tab The cases above are illustrative; they are not proof that the model is the sole explanation for any failure. Lehman represents a missing supervisory signal, SVB a distorted and weakly enforced supervisory signal, and Signature a converted signal mediated by post-public-service incentives and light oversight. The threshold evidence in \citet{KalmenovitzVijXiao2025} gives the empirical proof for the model's key incentive channel: when the legal restriction lowers the future value of the revolving-door option, regulators behave as if that option is valuable and alter current choices to preserve it. In all three historical cases, the relevant prediction is the same: when the debt penalty is weak and the regulator's effective signal does not bind early, a firm can cross the leverage-vega warning region before the public constraint arrives.
\section{Conclusion}\label{sec:Conclusion}
\tab This paper develops a theoretical model of a regulator's career option to quit and join a regulated firm. The model links regulatory signals, firm pricing and leverage choices, and the convexity of future labor-market compensation. When the regulator's outside opportunity resembles a call option on firm value, signals that increase firm value, leverage, or volatility can increase the value of the regulator's private career option. The same signals may increase value-at-risk in bad states when they push leverage and vega into the warning or crash regions described in the model.

\tab The proof-of-concept calibration illustrates the model's comparative statics. It shows how a regulator-induced signal can raise the career-option payoff while moving the firm from a go-leverage region into an early-warning or crash-risk region. The exercise is deliberately modest: it clarifies the internal mechanism and provides a template for future quantitative implementation.

\tab The model suggests several directions for further research. One can estimate the signal process in settings where regulatory decisions and post-agency employment outcomes are observed; characterize optimal quitting times for regulators with career options; design contracts or cooling-off rules that reduce option-like incentives for regulatory capture; and test whether leverage-vega warning regions predict distress after regulatory forbearance. A Bayesian reputation-learning extension, in which firms update beliefs about regulator type from observed signals, is also natural; however, it would add a belief-state variable and updating law, so it is left outside the present mechanism. The main theoretical message is that revolving-door incentives are not merely static conflicts of interest. When future compensation is convex in regulated firm value, regulatory signals can become state-contingent financial claims embedded in the firm's capital structure.

\appendix
\renewcommand{\theHequation}{\Alph{section}.\arabic{equation}}
\addappheadtotoc
\section{Abstract legislative mechanism design for embedding regulator signals}\label{apx:AbstractMechanismDesign}

\tab The main advantage of formulating the problem in the context of an abstract legislative mechanism is it allows us to plug in select variables in the reduced form so that we can examine a taxonomy of regulator signals. In the sequel we define incentive compatibility and participation constraints for our regulator. And we characterize the revelation principle which induces a regulator [and firm] to report its type truthfully. Suffice it to say that in the sequel, the regulator designs an implicit game or indirect legislative mechanism--consistent with maximizing her welfare function--in which she frames the strategies or choices available to a firm. In order for the firm's participation constraint to be binding, the price of its product must be at least equal to its marginal cost. This is consistent with the revelation principle in \citet[pg.~913]{BaronMyerson1982}. In which case, the firm's incentive compatibility constraint rests on its ability to maximize profits in that milieu.\footnote{We thank Sam Myers Jr. for clarification on this point.} By contrast, the regulator reports her type truthfully in order to maximize her career option. The interested reader is referred to \citet{BaligaSjostrum2008} for a brief exposition or \citet{Borgers2008} for a more detailed exposition on mechanism design theory, and \citet[pg.~208,~Appendix]{DalBo2006} for details on application and implementation in the context of a revolving door model.
\begin{figure}[!htbp]
\centering
\resizebox{0.92\linewidth}{!}{%
\begin{tikzpicture}[
  node distance=1.05cm and 1.45cm,
  box/.style={draw=black, rounded corners=2pt, fill=gray!8, align=center, minimum width=2.75cm, minimum height=0.82cm},
  state/.style={draw=black, ellipse, fill=gray!5, align=center, minimum width=2.35cm, minimum height=0.75cm},
  arrow/.style={-{Latex[length=2.2mm]}, thick, draw=black},
  dashedarrow/.style={-{Latex[length=2.2mm]}, thick, dashed, draw=black}
]
\node[state] (omega) {State\\$\omega\in\Omega$};
\node[state, below=of omega] (theta) {Regulator type\\$\theta\in\Theta$};
\node[box, right=of $(omega)!0.5!(theta)$] (signal) {Signal map\\$s:\Omega\times\Theta\to G$};
\node[box, right=of signal] (game) {Strategic space\\$G$};
\node[box, right=of game] (career) {Career-option rule\\$C:G\to X$};
\node[box, below=1.05cm of career] (direct) {Direct choice rule\\$f:\Omega\times\Theta\to X$};
\node[box, below=1.05cm of game] (outcome) {Outcome space\\$X$};
\draw[arrow] (omega) -- (signal);
\draw[arrow] (theta) -- (signal);
\draw[arrow] (signal) -- node[above] {$s(\omega,\theta)$} (game);
\draw[arrow] (game) -- node[above] {$C$} (career);
\draw[arrow] (career) -- (outcome);
\draw[dashedarrow] (omega.south east) .. controls +(1.3,-1.5) and +(-1.1,0.8) .. node[below left] {direct mechanism} (direct.west);
\draw[dashedarrow] (theta.east) .. controls +(1.0,-1.4) and +(-1.2,0.3) .. (direct.west);
\draw[arrow] (direct) -- node[right] {$f\cong C\circ s$} (outcome);
\node[draw=black, dashed, rounded corners=2pt, fit=(signal)(game)(career), inner sep=0.22cm, label=above:{Indirect legislative mechanism}] {};
\end{tikzpicture}%
}
\caption{Regulator's abstract mechanism design}\label{fig:RegulatorCareerOptionFig01}
\begin{minipage}{0.88\linewidth}
\footnotesize
\textit{Notes:} The regulator's type and the state of nature determine a signal into the strategic space $G$. The career-option rule maps the induced strategic outcome into $X$. The dashed path is the equivalent direct choice rule, summarized by $f\cong C\circ s$.
\end{minipage}
\end{figure}\begin{defn}[Mechanism]\label{defn:Mechanism}~\
   A mechanism $\mathcal{M}=(G,C)$ defines a set of feasible strategic actions $G$ available to an agent in outcome space $X$ with outcome rule $C:G\rightarrow X$. \hfill $\Box$
\end{defn}
\tab That said, let $\Theta$ be a type space for regulators, and $\Omega$ be a sample space or states of nature, $\mathcal{F}$ be the $\sigma$-field of Borel measurable subsets of $\Omega$, and $P$ a probability measure on $\Omega$. Let $X$ be an outcome space, and $f$ be a stochastic choice function such that $f:\Omega\times\Theta\rightarrow X$ is a direct legislative mechanism. Based on her type, the regulator sends a state dependent signal $s$ into some strategic space $G$, i.e. $s:\Omega\times\Theta\rightarrow G$. For instance, the space $G$ may comprise the signal $s(\omega,\theta)$ obtained from the regulator's welfare maximization; a menu of price (p) dependent profits $\Pi(p)$ for the firm, and the strategic form $g$. So that
\begin{equation}
   G= \{s(\omega,\theta),\Pi(p),g\}
\end{equation}
However, that signal $s(\omega,\theta)$ is subject to the vagaries of the state of nature. Nonetheless, it might be favorable regulation or deregulation. On the basis of interactions in $G$, i.e. the firms response, the regulator maps a payoff function $C$ into $X$, i.e $C:G\rightarrow X$.  These mappings are depicted in \cref{fig:RegulatorCareerOptionFig01}.
\begin{defn}[(PC): Regulator's participation constraint]\label{defn:RegParticipateConstraint}~\\
   Let $K$ be the compensation package of the regulator, assumed fixed over the epoch of career mobility, and $V(\omega,\Pi(p(\theta)))$ be the value of the firm as a function of price dependent profits. The regulator will participate in exercise of a state contingent career [call] option when
   $$C(s(\omega,\theta))=(\alpha V(\omega,\Pi(p(\theta)))-K)^+$$
   for some equity stake $0<\alpha<1$ \hfill $\Box$
\end{defn}
\begin{defn}[(IC): Regulator's incentive compatibility constraint]\label{defn:RegIncentiveConstraint}~\\
   Let $\theta$ be the regulator's true type and $\hat{\theta}$ be any other reported type. Let $(\Omega,\mathcal{F},P)$ be a probability space for states of nature. Then for some event $A\in\mathcal{F}$ we have
   $$\int_AC(s(\omega,\theta))dP(\omega)\geq\int_AC(s(\omega,\hat{\theta}))dP(\omega)$$
   That is, the regulator has incentive to reveal her true type because the expected value of her career option is greater than the expected value if she does not. \hfill $\Box$
\end{defn}
\cref{fig:RegulatorCareerOptionFig01} depicts an indirect mechanism, i.e. ``topological lifting" of the stochastic choice function $f$. That is, for some $\omega\in\Omega$ we have a composite function $C\circ s$ that is functionally equivalent to $f$.
\begin{defn}[Revelation principle]\label{defn:RegRevelationPrinciple}~\\
Given any feasible career option mechanism $C:G\rightarrow X$, there exist a feasible direct legislative mechanism $f:\Omega\times\Theta\rightarrow X$ such that for $A\in\mathcal{F}$ we have
\begin{align}
    f(\omega,\theta)&\cong (C\circ s)(\omega,\theta)\label{eq:RegulateTopologyLifting}\\
    \int_Af(\omega,\theta)dP(\omega)&=\int_A(C\circ s)(\omega,\theta)dP(\omega)
\end{align}
\hfill $\Box$
\end{defn}
With full information the regulator's [options] contract can be described by the direct mechanism $f$. There is no signalling or need for the regulator to implement a set of rules for the firm. Because everyone knows that regardless of the state of nature $\Omega$, the regulator will quit and join the regulated firm for a payoff in $X$. However, that scenario may be associated with a "$P$-negligible event, i.e. small probability of occurrence for probability measure $P$ on $\Omega$, in the Borel subsets $\mathcal{F}$ of $\Omega$. However, with asymmetric information we need to design an option contract in $G$-space so that its value is $C(s(\omega,\theta)),\;\;\omega\in\Omega,\;\theta\in\Theta$. The elements of that state dependent contract must include firm value and other "observable" regulatory signaling variables. For instance, abstract signals can represent the regulator's effect on taxes, sales, reporting rules, capital treatment, or other corporate liabilities that affect reported earnings and market value.
\bibliographystyle{chicago}        
\addcontentsline{toc}{section}{References} 
\raggedright
\bibliography{RegulatorCareerOption}       
\end{document}

%% file: ProofConceptCalibrationSummary.tex
\begin{tabular}{lccc}
\hline\hline
Statistic & Baseline & Bank-like & Stochastic signal\\
\hline
Initial leverage $D_0/V_0$ & 0.55 & 0.65 & random path\\
Asset volatility $\sigma_0$ & 0.20 & 0.25 & noisy $\sigma(\theta)$\\
Critical $\theta$ for $p_d>5\%$, $\tau=1$ & -- & 0.48 & --\\
Monte Carlo $\Pr\{DD<1.5\}$ & -- & -- & 0.008\\
Monte Carlo $\Pr\{DD<1.0\}$ & -- & -- & 0.000\\
Median tail-risk proxy & -- & -- & 0.0075\\
95th percentile tail-risk proxy & -- & -- & 0.0431\\
\hline\hline
\end{tabular}